\newtheorem{theorem}{Theorem}[section]
\newtheorem*{definition}{Definition}
\newtheorem{corollary}[theorem]{Corollary}
\newtheorem{lemma}[theorem]{Lemma}
\title{A manifestly covariant framework for causal set dynamics}
\author[1,2]{Fay Dowker}
\author[1]{Nazireen Imambaccus}
\author[1]{Amelia Owens}
\author[2]{Rafael Sorkin}
\author[1]{Stav Zalel}
\affil[1]{Blackett Laboratory, Imperial College London, SW7 2AZ, U.K.}
\affil[2]{Perimeter Institute, 31 Caroline Street North, Waterloo ON, N2L 2Y5, Canada.}
\newcommand{\lc}[1]{\tilde{#1}}
\def\twoch{\,\,\begin{picture}(0,1) 
\thicklines
\multiput(0,0)(0,10){2}{\circle*{2}}
\put(0,0){\line(0,1){10}}
\end{picture}\,\,}
\def\twoach{\,\begin{picture}(1,1) 
\thicklines
\multiput(0,0)(10,0){2}{\circle*{2}}
\end{picture}\,\,\,}
\def\Lcauset{\,\,\begin{picture}(2,2) 
\thicklines
\multiput(0,0)(0,10){2}{\circle*{2}}
\put(0,0){\line(0,1){10}}
\multiput(0,0)(10,0){2}{\circle*{2}}
\end{picture}\,\,}
\begin{document}

\maketitle

\abstract{We propose a manifestly covariant framework for causal set dynamics. The framework is based on a structure, dubbed \textit{covtree}, which is a partial order on certain sets of finite, unlabeled causal sets. We show that every infinite path in covtree corresponds to at least one infinite, unlabeled causal set. We show that transition probabilities for a classical 
random walk on covtree induce a classical 
measure on the $\sigma$-algebra generated by the stem sets.}

\tableofcontents

\section{Introduction}

General Relativity (GR) has the property of general covariance.  As discussed in section 7 of \cite{Sorkin:2007qh}, this gauge invariance of GR has two facets. The first is that physical statements -- or, equivalently, properties or predicates or events -- in GR must be diffeomorphism invariant. The second is that the equations of motion of GR are diffeomorphism invariant so that metrics in a diffeomorphism equivalence class must either all satisfy or all not satisfy the equations of motion. 
As stated in \cite{Sorkin:2007qh}, ``[In GR] the second facet of covariance flows directly from the first as a consistency condition, because it would be senseless to identify two metrics one of which was allowed by the equations of motion and the other of which was forbidden; and conversely, the kinematical identification must be made if one wishes the dynamics to be deterministic. Thus, the first or ``ontological'' facet of general covariance tends to coalesce with its second or ``dynamical'' facet.''

There is a widespread expectation that GR will turn out to be an approximation, at large scales and in certain circumstances, to a deeper theory of quantum gravity.  A. Einstein's struggles over the understanding of general covariance were central to the development of GR and one might expect that grappling with the corresponding issues within quantum gravity will be important to its development too \cite{Stachel:2014}. The requirement that the diffeomorphism invariance of GR must emerge from quantum gravity in the large scale approximation is indeed used to formulate guiding precepts for each approach  to quantum gravity, though the form that these precepts take varies from approach to approach. 

In the case of the causal set approach to the problem of quantum gravity,  any deeper precept of general covariance cannot  be literally diffeomorphism invariance because diffeomorphism is a continuum concept and causal set theory is discrete. Causal set theory postulates that the fundamental structure of spacetime is atomic at the Planck scale and takes the form of a causal set or locally finite partial order.\footnote{At least, this is the kinematics of the theory. In the full quantum theory this statement will be revised to take account of quantum interference between causal sets in the sum over histories.}  The elements of the causal set are the atoms of spacetime and continuum spacetime  is an approximation to the causal set at large scales. The order relation of the underlying causal set reveals itself as the causal structure of the approximating continuum spacetime and the number of causal set elements manifests itself as the spacetime volume of the approximating continuum spacetime. For a recent review of causal set theory see \cite{Surya:2019ndm}. 

The structure of continuum spacetime, then, emerges from Order and Number and this central conjecture of causal set theory has an immediate consequence: the physical content of a causal set is independent of what mathematical objects the causal set elements are and is also independent of any additional labels those causal set elements might carry: only the order relation of the elements
 and the number of elements has physical meaning. This ``mathematical-identity-and-label independence'' is a good candidate for a condition of general covariance in causal set theory, at least as far as the first facet, mentioned above, goes \footnote{Whether this condition alone is enough to give rise to diffeomorphism invariance is subsumed in the question of whether causal sets can give rise to a continuum approximation at all.}. 

This form of general covariance is a consequence of a more general guiding heuristic, Occam's Razor, applied in the particular case of causal set quantum gravity and grounded in earlier seminal work, theorems in Lorentzian geometry by Penrose, Kronheimer, Hawking and Malament \cite{Kronheimer:1967, Hawking:2014xx, Malament:1977}. These theorems show that the spacetime causal order plus spacetime volume are sufficient, in the continuum, to provide the full geometry of a Lorentzian spacetime for a very large class including all globally hyperbolic spacetimes. This is strong evidence that order and number in the discrete substructure are together sufficient to encode approximate Lorentzian geometrical information at large scales. From this then arises the principle that the mathematical identity of the spacetime atoms is not physical. 

The second facet of general covariance in GR mentioned above, in which all diffeomorphic manifold-metric pairs either do or do not satisfy the Einstein equations, does not have such a direct analogue in causal set theory as it currently stands. The dynamical models developed thus far are stochastic and there is no analogue of ``equations of motion'' that a given causal set can either satisfy or not in a binary distinction. Nevertheless, a specific proposal for a condition of ``discrete general covariance'' was made in the context of a particular paradigm for causal set dynamics, namely classical random models of causal set growth, and this condition was used to construct an interesting family of classical stochastic dynamical models for causal sets, the Classical Sequential Growth (CSG) models \cite{Rideout:1999ub}. Each CSG model is a stochastic process of growth of the causal set spacetime in which new elements are born in sequence, forming relations with the previous elements in the sequence at random with a probability distribution given by the particular model. The sequence of the births is a total order on the spacetime atoms and contains unphysical, gauge information. As described further in section \ref{dynamics_for_causets_subsec}, the definition of a CSG model is given in terms of this sequence, as is the discrete general covariance condition\footnote{The discrete general covariance (DGC) condition could be considered as the ``dynamical'' facet within causal set theory. The DGC condition imposes that, given a pair of order-isomorphic causal sets with cardinality $n$, the probabilities of growing each by stage $n$ are equal. This is somewhat akin to the ``dynamical'' facet in GR where all diffeomorphic manifold-metric pairs have the same action and therefore the same weight in the path integral.}.  This condition is well-justified within the context of the framework of sequential growth but the framework  depends on and refers to the unphysical sequential label of ``stage''. 
The question arises: is it possible to define physically interesting causal set growth dynamics that only ever refer to the physical degrees of freedom, to the physical partial order with no reference to any other labels? This paper describes work motivated by this question and provides a positive
first step in that direction.

\section{Review} \label{CSGM}

\subsection{Preliminaries}

In this subsection we list some of the terminology and assumptions used in this paper.  A more complete glossary of causal set terminology is given in \cite{glossary}. All the infinite causal sets we consider in this paper are countable and past-finite (see below). 

\noindent  Let $(C, \prec)$ be a causal set. We use the irreflexive convention in which $x \not\prec x$. The word
causet is short for causal set.

\begin{itemize}

\item[] If $x\prec y$ we say $x$ is below $y$, $y$ is above $x$, $x$ is an ancestor of $y$ or $y$ is a descendant of $x$.

\item[] The \textbf{past} of $x\in C$ is the subcauset $past(x):=\{y\in C | y\prec x\}$. This is the non-inclusive past:
$x \not\in past(x)$. The \textbf{{future}} of $x$ is defined similarly.

\item[] $C$ is \textbf{past-finite} if $|past(x)|<\infty\,, \ \forall \ x \in C$. 

\item[] A \textbf{stem} in $C$ is a finite subcauset $S$ of $C$ such that  if $x\in S$ and $y\prec x$ then $y \in S$.
 \item[] An $\mathbf{n}$-\textbf{stem} is a stem with cardinality $n$. 

\item[] A relation $x \prec y$ is called a \textbf{link} if there is no element in the order
between $x$ and $y$. In that case we say $x$ is directly below $y$, $y$ is directly above $x$, $x$ is a direct ancestor of $y$ or $y$ is a direct descendant of $x$. A link is also called a covering relation and we can also say $y$ covers $x$. 

\item[] 
An \textbf{antichain} is a causet whose elements are unrelated to each other. 

\item[]  A \textbf{chain} is a causet whose elements are all related. 

\item[] A \textbf{path} in $C$ is a subcauset of $C$ which is a chain all of whose links are also links of $C$.

\item[] The element $x$ of $C$ is in \textbf{level} $\mathbf{n}$ if the
longest chain of which $x$ is the maximal element 
has cardinality $n$. 
Level 1 of $C$ comprises the minimal elements of $C$, 
level 2 comprises the minimal elements of what remains of $C$ after the elements in level 1 are deleted, \textit{etc}.

\item[] It is useful to represent a causet as a graph in a \textbf{Hasse diagram} in which elements are represented by nodes, and there is an upward-going edge from $x$ to $y$ if and only if $x \prec y$ is a link. The other relations are implied by transitivity. All the pictures of causal sets in this paper are Hasse diagrams. 

\item[] An \textbf{isomorphism}, $f$, between two causets, $C, D$, is a bijection $f:C\rightarrow D$ such that $f(x)\prec_D f(y) \iff x\prec_C y, \ \forall \ x,y \in C$.  If $C$ and $D$ are isomorphic,  we write $C\cong D$.

\end{itemize}
\subsection{Labeled causets and \textit{n}-orders}

For concreteness, we fix a collection of \textit{ground sets} for the causal sets we will work with in this paper, following notation and terminology adapted from \cite{Brightwell:2002vw, Ash:2005za}. Consider, for each $n>0$,  the interval of natural numbers, $[n-1]:=\{0,1,...,n-1\}$ of cardinality $n$.  Let $\tilde{\Omega}(n)$ denote the set of partial orders, $\prec$, on the ground set $[n-1]$ satisfying $i\prec j \implies i<j$. 
We call an element of $\tilde{\Omega}(n)$ a \textbf{finite labeled causet}.\footnote{More generally, for a causal set $C$ of cardinality $n$ we call a bijection  $f:[n-1] \rightarrow C$ 
a natural labeling if $f(i) \prec f(j)\implies i < j$.  Using the interval $[n-1]$ itself as the ground set for $C$, 
together with the condition that $i\prec j \implies i<j$, 
makes the identity map into a natural labeling.} 

 We define $\tilde{\Omega}(\mathbb{N}):=\bigcup \limits_{n\in\mathbb{N}^+}\tilde{\Omega}(n)$, the set of all finite labeled causets. 

We define $\tilde{\Omega}$ to be the set of partial orders on the ground set $\mathbb{N}$ satisfying $i\prec j \implies i<j$. 
We call an element of $\tilde{\Omega}$ an \textbf{infinite labeled causet}.  By this definition, every infinite labeled causet is past finite since element $j$ can be above at most $j$ other elements. The converse is also true: any infinite, past finite causet admits a natural labeling by the natural numbers \cite{Brightwell:2011}.

We denote labeled causets (finite or infinite) and their stems with a tilde.
Figure \ref{stem_in_lc} gives some examples of stems of a labeled causets. By our definitions, not all stems of a labeled causet are 
themselves labeled causets because the ground set of the stem may not itself be an interval, as shown in figure \ref{stem_in_lc}. 
\begin{figure}[htpb]
  \centering
	\includegraphics[width=0.74\textwidth]{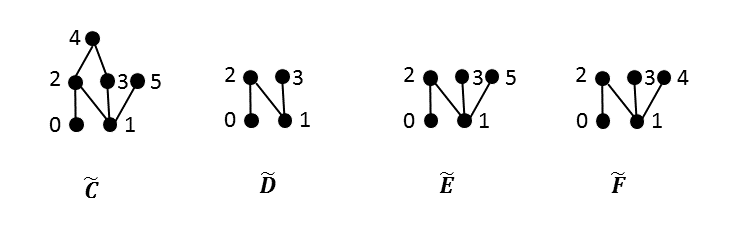}
	\caption{$\lc{C}, \lc{D}$ and $\lc{F}$ are labeled causets. $\lc{D}$ and $\lc{E}$ are stems in $\lc{C}$. $\lc{F}$ is \textit{not} a stem in $\lc{C}$ because it is not a subcauset of $\lc{C}$. $\lc{E}$ is {not} a labeled causet by our definition because its ground set is not an interval of integers.}
	\label{stem_in_lc}
\end{figure}

As described in the introduction, it is a tenet of causal set theory that the atoms of spacetime have no structure.  It is of no physical relevance what mathematical objects the elements of a causal set are.  One way to express this is to say we are ultimately interested only in isomorphism equivalence classes of causal sets. 

Isomorphism is an equivalence relation on each $\tilde{\Omega}(n)$, and on $\tilde{\Omega}$. We define \textbf{unlabeled causets}, or \textbf{orders} for short, to be isomorphism classes of labeled causets. 
An \textbf{unlabeled causet of cardinality} $\mathbf{n}$, or $\mathbf{n}$-\textbf{order} for short,
 is an isomorphism class, $C=[\lc{C}] =\{\lc{D}\in \tilde{\Omega}(n)\, |\, \lc{D}\cong \lc{C}\}$, where $\lc{C}\in \tilde{\Omega}(n)$ is some representative of $C$.

An \textbf{infinite unlabeled causet}, or \textbf{infinite order} for short, is an isomorphism class, $C=[\lc{C}] =\{\lc{D}\in \tilde{\Omega}\, |\, \lc{D}\cong \lc{C}\}$, where $\lc{C}\in \tilde{\Omega}$ is some representative of $C$. 
We define $\Omega(n)$ to be the set of $n$-orders, and ${\Omega}(\mathbb{N}):=\bigcup \limits_{n\in\mathbb{N}^+}{\Omega}(n)$
is the set of finite orders. $\Omega$ is defined to be the set of infinite orders.

We generalise the concept of stem to orders. 
We say a finite order, $S$, is a stem in order $C$ if there exists a representative of $S$ 
which is a stem in a representative of $C$ and in this case we say, variously, $S$ is a stem in $C$, or $S$ occurs as a stem in $C$ or $C$ contains $S$ as a stem. 
We say 
a finite order, $S$, is a stem in labeled causet $\lc{C}$ if the order $S$ is a stem in the order $[\lc{C}]$. 
 So, the meaning of stem depends on the context. 

 Examples are shown in figure \ref{defn_unlabeled_stem}. 
Note that if order $S$ is a stem in order $T$ and $T$ is a stem in order $U$ then $S$ is a stem in $U$: ``a stem in a stem is a stem'' . So there is an ``order-by-inclusion-as-stem'' on the set of all finite orders.

\begin{figure}[htpb]
  \centering
	\includegraphics[width=0.64\textwidth]{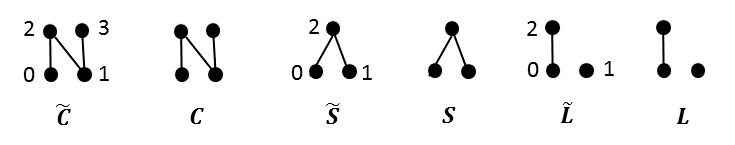}
	\caption{Labeled causets  $\lc{C}, \ \lc{S}$ and $\lc{L}$ are representatives of orders $C,\ S$ and $L$, respectively. $\lc{S}$ is a 3-stem in $\lc{C}$. $\lc{L}$ is not a subcauset of $\lc{C}$ so it is not a stem in $\lc{C}$. $S$ and $L$ are 3-stems in $\lc{C}$ and in $C$. }
	\label{defn_unlabeled_stem} 
\end{figure}

Finally, we introduce a concept that will be important later. 
An infinite order $C\in \Omega$ is a \textbf{rogue} \cite{Brightwell:2002vw} if there exists an infinite order $D$ such that $D \neq C$ and the two orders have the same stems. If infinite orders $C$ and $D$ have the same stems we write $C\sim_R D$. If $C\sim_R D$ and $C\not= D$, we say that $C$ and $D$ are equivalent rogues. An example of a pair of equivalent rogues is given in figure \ref{rogueexample}.

\begin{figure}[htpb]
  \centering
	\includegraphics[width=0.44\textwidth]{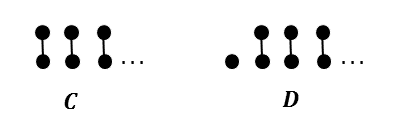}
	\caption{$C$ is a countable union of 2-chains and $D$ is the union of $C$ with a single unrelated element.  $C$ and $D$ have the same stems -- any union of finitely many 2-chains and a finite, unrelated antichain  -- 
	so $C$ and $D$ are equivalent rogues.}
	\label{rogueexample}
\end{figure}

\subsection{Dynamics}
\label{dynamics_for_causets_subsec}

\par Guided by the insight that path integral quantum theory is a form of generalised measure theory \cite{Sorkin:1994dt}, and by the heuristic of \textit{becoming} \cite{Sorkin:2007qh, spacetimeatoms}, a major breakthough in the development of a dynamics for causal sets was the construction of the Classical Sequential Growth (CSG) models by Rideout and Sorkin
\cite{Rideout:1999ub}. 
A CSG model is a  stochastic process consisting of the sequential coming into being, or \textit{birth}, of new causet elements and the formation of relations between each newly born element and a randomly chosen subset of the elements  born previously in the sequence. 
The process can be represented as an upward-going random walk on a partially ordered tree called labeled poscau (short for the poset of labeled causal sets):

\begin{definition}\par \textbf{Labeled poscau} is the partial order $(\tilde{\Omega}(\mathbb{N}), \prec)$,  where $\lc{S}\prec \lc{R}$ if and only if $\lc{S}$ is a stem in $\lc{R}$.\footnote{We use the symbol $\prec$ to denote the relation for several different partial orders in this work. The meaning of $\prec$ in each case is to be inferred from the context.}
\end{definition}

Labeled poscau is a tree formed of countably many levels, the first three of which are shown in figure \ref{lposcau}. 
A growth model based on labeled poscau is a random walk formed of a sequence of stages. At stage $n$, a labeled causet, $\tilde{C}_n$, of cardinality $n$ transitions to a labeled causet, $\tilde{C}_{n+1}$, of cardinality $n+1$ such that $\tilde{C}_n$ is a stem in $\tilde{C}_{n+1}$. The transition can be thought of as the birth of the new element $n$, of $\tilde{C}_{n+1}$, which comes into being above a randomly chosen subset of $\tilde{C}_n$, the \textit{ancestor set} of the newborn.  The labeled causet $\tilde{C}_{n+1}$
 is one of the `child' causets that are directly above the `parent' $\tilde{C}_n$ 
 in labeled poscau. Any assignment of transition probabilities, satisfying the Markov sum rule, to all the links in labeled poscau gives a  well-defined stochastic process. Each infinite path, $\lc{C}_1\prec \lc{C}_2\prec ... \prec \lc{C}_k\prec\dots$, in labeled poscau beginning at the root
 is identified with the infinite labeled causet $\bigcup\limits_{i\in\mathbb{N}^+}\lc{C}_i$.
This is a one-to-one correspondence and so the histories in the model can be though of, equivalently, as elements of $\tilde{\Omega}$ or as infinite paths in labeled poscau.  
   
 An \textit{event} in such a stochastic process is a measureable subset of $\tilde{\Omega}$.  
For example, corresponding to each node, $\lc{C}_n$ of cardinality $n$,  in labeled poscau is
 the cylinder set \cite{Brightwell:2002vw},  
\begin{align}
cyl(\lc{C}_n) := \{  \lc{D} \in \tilde{\Omega} \ |\  \lc{D}|_{[n-1]}   = \lc{C}_n \}\,. 
\end{align}
The measure of each cylinder set is given by 
$\tilde{\mu}(cyl(\lc{C_n}))=\mathbb{P}(\lc{C_n})$, where 
$\mathbb{P}(\lc{C_n})$ is the probability that the random walk reaches $\lc{C_n}$. This measure can be uniquely 
extended to a measure on the $\sigma$-algebra $\tilde{\mathcal{R}}$ 
generated by the collection of cylinder sets,  by standard results in stochastic processes and measure theory \cite{Kolmogorov:1975}.  

Now, not all events in $\tilde{\mathcal{R}}$ are physical because they are not all covariant. For example, the cylinder set $cyl(\lc{C}_n)$ is the event ``the causet at the end of stage $n-1$ of the process is $\lc{C}_n$'' which refers to the unphysical, gauge information of the stage. An event, $\mathcal{E}$, is covariant if whenever a labeled causet, $\lc{C}$, is in $\mathcal{E}$ then all labeled causets isomorphic to $\lc{C}$ are also in $\mathcal{E}$. $\mathcal{E}$ can then be identified, in an obvious way, with a set of orders. We define the sub-$\sigma$-algebra, $\mathcal{R}\subset \tilde{\mathcal{R}}$,     as the algebra of all covariant measureable events \cite{Brightwell:2002yu, Brightwell:2002vw}.

\begin{figure}[htpb]
  \centering
	\includegraphics[width=0.6\textwidth]{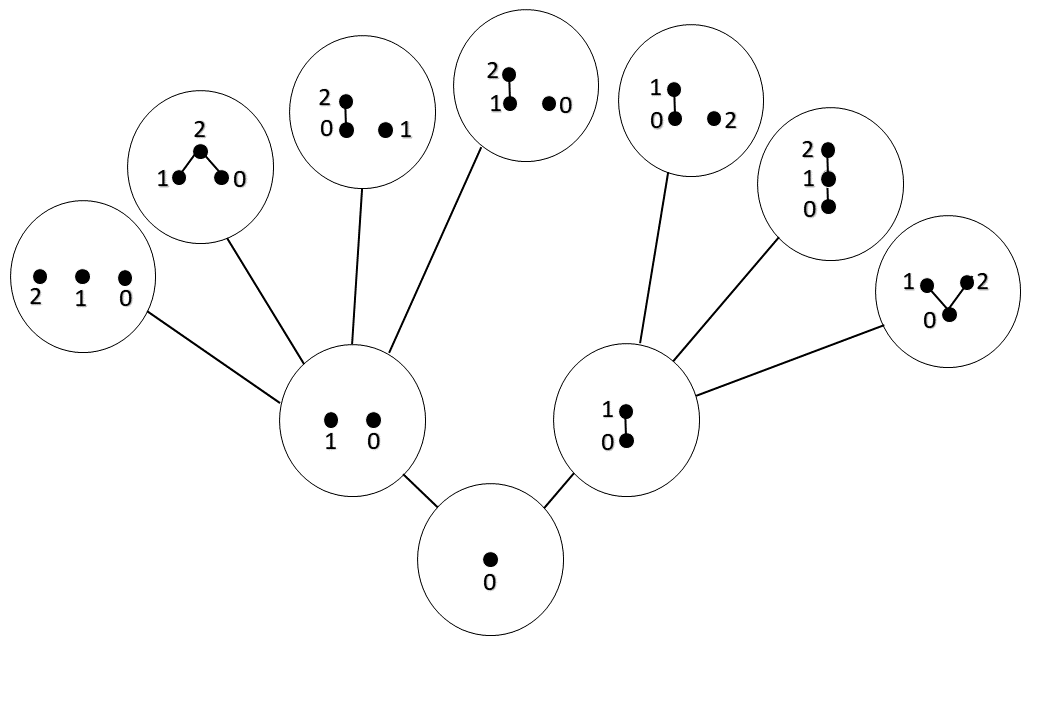}
	\caption{The first three levels of labeled poscau. The random walk starts at the root and proceeds upwards.}
	\label{lposcau}
\end{figure}
 
\subsection{Classical Sequential Growth models}

The collection of random walks up labeled poscau is vast, so Rideout and Sorkin (RS) imposed physically motivated conditions to restrict the models to a more interesting class, the
 Classical Sequential Growth (CSG) models.  The transition probabilities for CSG models were derived by RS by imposing on the random walk  two conditions:  Bell Causality (BC) and Discrete General Covariance (DGC) \cite{Rideout:1999ub}. 
DGC is the condition that the probability of arriving at any node of labeled poscau depends only on the isomorphism class of the node. For example, the probabilities to arrive at the three nodes in figure \ref{lposcau} which are 
in the isomorphism class of the $``L"$ 3-order (\Lcauset \ \ ) are equal in a CSG model. BC is analogous to the local causality condition that enters in the derivation of the Bell inequalities in Bell's no-local-hidden-variables theorem. At stage $n$, consider 
two possible transitions from a parent causet $\lc{C}$ either to child $\lc{A}$ or to child $\lc{B}$. Suppose there is an element, 
$k$, of $\lc{C}$, which is not in the ancestor set of the newborn element $n$, neither in $\lc{A}$ nor in $\lc{B}$.
Such an element $k$ is called a \textit{spectator} of both transitions. Now consider transitions at stage $n-1$, 
$\lc{C'} \rightarrow \lc{A'}$ and $\lc{C'} \rightarrow \lc{B'}$, which are formed from the previous ones by deleting 
the spectator $k$ from $\lc{C}$, $\lc{A}$ and $\lc{B}$ and consistently relabeling the remaining causal sets so their base sets are integer intervals. Bell Causality is the condition
\begin{align}
\frac{  \mathbb{P}(\lc{C} \rightarrow \lc{A})  } {  \mathbb{P}(\lc{C} \rightarrow \lc{B})  }
= \frac{  \mathbb{P}(\lc{C'} \rightarrow \lc{A'})  } {  \mathbb{P}(\lc{C'} \rightarrow \lc{B'})  }\,.
\end{align}

RS showed that these two conditions of BC and DGC imply that a CSG model is specified by a sequence of non-negative real numbers, $\{t_0, t_1, t_2, \dots\}$, which determine the transition probability for each possible transition 
$\tilde{C}_n \rightarrow \tilde{C}_{n+1}$ in the following way. The newly born element $n$ chooses a subset $Y$ from amongst all the subsets of $\tilde{C}_n$ with relative probability $t_{|Y|}$ and $n$ is put above all elements of $Y$ and the transitive closure taken.  For completeness, we give the explicit form of the transition amplitude in a CSG model for the transition $\tilde{C}_n \rightarrow \tilde{C}_{n+1}$:
\begin{align}\label{transprob}
\mathbb{P}(\tilde{C}_n \rightarrow \tilde{C}_{n+1}) &= \frac{\lambda(\varpi, m)}{\lambda(n, 0)}\,,
\end{align}
where 
$\varpi$ is the cardinality of the ancestor set of the newborn element $n$, $m$ is the number of maximal elements of 
the ancestor set of  $n$ and  
\begin{equation} 
\lambda(k, p) := \sum_{i = 0}^{k - p} \binom{k-p}{i} t_{p+i}.
\end{equation}

The covariant events in CSG models were fully characterised and given a physical interpretation in \cite{Brightwell:2002yu,Brightwell:2002vw}. Here we give a brief summary of those results which were based on the 
concept of stem set.  Given an $n$-order $C_n$, the stem set 
$stem(C_n)$ is the event ``$C_n$ is a stem in the growing order'' and is given mathematically by 
\begin{align}
stem(C_n) := \{  \lc{D} \in \tilde{\Omega} \ |\  C_n \ \textrm{is a stem in} \ \lc{D} \}\,. 
\end{align}

Let $\mathcal{S}$ denote the collection of all stems sets, and let $\mathcal{R}(\mathcal{S})$ denote the \textit{stem algebra}, the $\sigma$-algebra generated by $\mathcal{S}$. We call an element of $\mathcal{R}(\mathcal{S})$ a \textit{stem event}. Stem events are covariant, and $\mathcal{R}(\mathcal{S})$ is a sub-$\sigma$-algebra of $\mathcal{R}$.
This inclusion is strict, mathematically, but in a well-defined, physical sense the stem algebra exhausts all the covariant events. Indeed, for every covariant event, $\mathcal{E}$,  one can find a stem event $\mathcal{F}$ such that the symmetric difference between $\mathcal{E}$ and $\mathcal{F}$ is a set of rogue causets which is of measure zero in any CSG model because the set of all rogues is of measure zero.
 In other words, in CSG models covariant events are, for all practical purposes, 
 stem events. This is important. It means that every physical statement in a CSG model for which the dynamics provides a probability is a (countable) logical combination  of statements about which finite orders are stems in the causet
universe.

\section{A covariant framework} \label{seccovtree}

There exist several successful gauge theories, including GR, that are defined mathematically in terms of their gauge dependent degrees of freedom but in which physical statements can be made, purged of any unphysical gauge dependence introduced along the way.
CSG models make sense physically in this way. Although the definition of a CSG model is given in terms of an unphysical sequence of birth events, the model provides an exhaustive set of physically comprehensible, covariant measureable events from which we make physical predictions. Were we only seeking a class of interesting classical growth models to explore, we might be content with CSG models as we have them. But for quantum gravity in the causal set approach, the task in hand is to find a quantum dynamics for causal sets, from which GR must then emerge as a large scale approximation. We seek quantum growth models. 

One possible route to a Quantum Causet Dynamics would be to try to generalise what was done for CSG to the quantum case, finding appropriate analogues of the DGC and BC conditions on a decoherence functional or double path integral for a growing causal set \cite{Dowker:2010qh}.  In this paper,  we take a slightly different path by asking whether there is an explicitly label-independent framework for
classical causet growth,  an alternative to labeled poscau, which might suggest novel possibilities for quantal generalisations. We frame the question as:  is it possible to construct a physically well-motivated measure on the stem algebra $\mathcal{R}(\mathcal{S})$ 
\textit{directly}, in a manifestly label-independent way that does not rely on any gauge dependent notion and which respects the heuristic of growth and becoming? 

There already exists a structure in the literature, poscau  \cite{Rideout:1999ub}, which at first sight might seem to furnish such a framework.  Poscau is a partial order on finite orders, 
$(\Omega(\mathbb{N}), \prec)$, where $A\prec B$ if and only if $A$ is a stem in $B$.  Figure \ref{hasse} shows a Hasse diagram of the first three levels of poscau.\footnote{Rideout and Sorkin originally used poscau to introduce CSG models.}
\begin{figure}[htpb]
 \centering
	\includegraphics[width=0.6\textwidth]{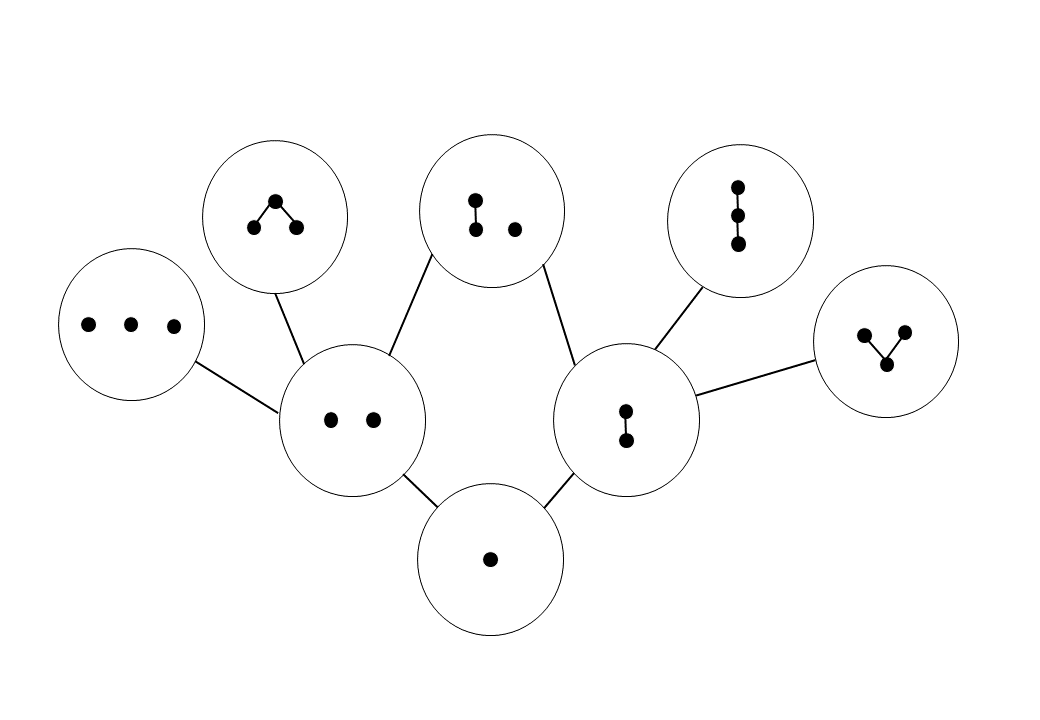}
	\caption{The first three levels of poscau.}
	\label{hasse}
\end{figure}
If one identifies each node $A$ in poscau with its stem set, $stem(A)$, one might be tempted to try to define a dynamics as a random walk up poscau such that arriving at the node $A$ corresponds to the occurrence of the covariant event $stem(A)$. This does not work because in such a dynamics only one stem set at each level can occur and the growing order would only have a single stem of each finite cardinality. 

Thinking in this way, however, suggests the solution: the walk should be on a tree formed of countably many levels in which the nodes in level $n$ are not single $n$-orders but \textit{sets} of $n$-orders. Each set of 
$n$-orders in level $n$ will correspond to the covariant event ``the $n$-stems of the growing order are the elements of this set.''  We  will call this tree  \textbf{covtree} (short for \textbf{covariant tree})
and in the classical case, the dynamics will take the form of a stochastic process consisting 
of a sequence of 
stages, each of which is a transition from a node in one level of covtree to 
a node in the level above, just as the CSG models are defined on labeled 
poscau.  We will make this precise in the rest of this section. 

\subsection{Certificates}

Let $\Gamma_n$ be a (non-empty) set of $n$-orders, \textit{i.e.} a subset of $\Omega(n)$. 
\begin{definition}
An order $C$ is a \textbf{certificate} of $\Gamma_n$ if $\Gamma_n$ is the set of  all $n$-stems in $C$.
\end{definition}

\par 
Note that a given $\Gamma_n\subseteq \Omega(n)$ may have no certificate: see the example $\Gamma'_3$ in figure \ref{cert_defns}. We use $\Lambda$ to denote the collection of sets of $n$-orders, for all $n$, which have certificates:

\begin{align}
\Lambda := \bigcup\limits_{n\in\mathbb{N}^+}\{\Gamma_n \subseteq \Omega(n)| \exists \text{ a certificate for } \Gamma_n\}\,. 
\end{align}
Note also that if  $\Gamma_n$ has a certificate then it has infinitely many certificates and that if $\Gamma_n$ has a certificate then it has a finite certificate. 

\begin{figure}[htpb]
  \centering
	\includegraphics[width=0.66\textwidth]{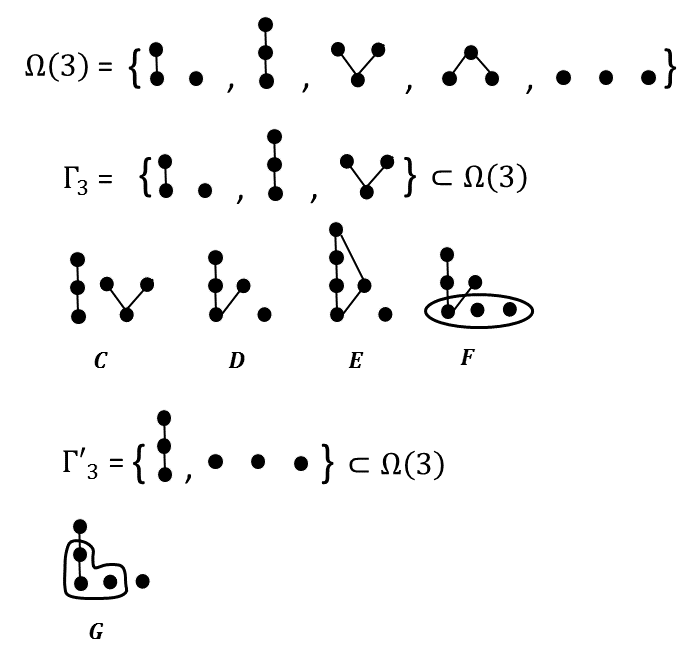}
	\caption{
	$\Omega(3)$ and two of its subsets, $\Gamma_3$ and $\Gamma'_3$, are shown. $C$, $D$ and $E$ are certificates of $\Gamma_3$. $F$ is \textit{not} a certificate of $\Gamma_3$ because $F$ contains the 3-antichain (circled in the figure) as a 3-stem. $\Gamma'_3$ has no certificates because every order which contains the 3-chain and the 3-antichain also contains the ``L'' 3-order as illustrated by $G$.
	}
	\label{cert_defns}
\end{figure}

\begin{definition} Given some $\Gamma_n\in\Lambda$, we order its finite certificates as follows: let $C_1,C_2$ be finite certificates of $\Gamma_n$, then $C_1\preceq C_2$ if $C_1$ is a stem in $C_2$. A \textbf{minimal certificate} of $\Gamma_n$ is minimal in this order.
\end{definition}

\par If $\Gamma_n\in\Lambda$ has more than one minimal certificate, these minimal certificates need not have the same cardinality\footnote{Recall that we define the cardinality, $|C|$, of an $n$-order $C$ as $|C|:=n$.} as each other. Also, an $n$-order in $\Gamma_n$ may be embedded in a minimal certificate of $\Gamma_n$ in more than one way.\footnote{This is shorthand for a more precise statement. Let the certificate of $\Gamma_n$ be $C$ and let the $n$-order be $X\in \Gamma_n$. 
We say that $X$ can be embedded in $C$ in $k$ ways if, for any labeled representative $\lc{C}$  of $C$, there are $k$ different subcausets of $\lc{C}$ which are stems and which are isomorphic to a representative of $X$. } Examples are shown in figure \ref{min_cert_defns}. 

\begin{lemma} Let $\Gamma_n=\{A^1,A^2,...,A^k\}$ be a set of $n$-orders. If $C$ is a minimal certificate of $\Gamma_n$ then $n \le |C| \le kn$.
$|C|= n$ if and only if $\Gamma_n$ is a singleton set ($k = 1$). 
\label{cardmincert}
\end{lemma}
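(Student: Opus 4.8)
The plan is to establish the lower bound $n \le |C|$, then the upper bound $|C| \le kn$, and finally the equality condition, with the upper bound carrying essentially all of the work. For the lower bound: since $\Gamma_n$ is non-empty, $C$ has at least one $n$-stem, and an $n$-stem of $C$ is by definition a subcauset of $C$ with $n$ elements, so $|C| \ge n$.

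For the upper bound, the idea is to exhibit a certificate of $\Gamma_n$ that sits inside $C$ and is built from at most $kn$ elements, and then to appeal to the minimality of $C$. Fix a labeled representative $\tilde C$ of $C$. For each $i\in\{1,\dots,k\}$ the order $A^i$ occurs as an $n$-stem of $C$, so there is a subcauset $S_i\subseteq\tilde C$ which is a stem in $\tilde C$ and whose isomorphism class is $A^i$ (this is precisely what it means for $A^i$ to be a stem in the order $C$); in particular $|S_i|=n$. Put $U:=\bigcup_{i=1}^{k}S_i\subseteq\tilde C$. I would then check two things: (i) $U$ is a stem in $\tilde C$, since a union of stems is a stem; and (ii) the order $[U]$ is a certificate of $\Gamma_n$. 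For (ii), each $S_i$, being a stem in $\tilde C$ and contained in $U$, is also a stem in $U$, so every $A^i$ is an $n$-stem of $[U]$; conversely, since $U$ is a stem in $\tilde C$ and ``a stem in a stem is a stem'', every $n$-stem of $[U]$ is also an $n$-stem of $C$, hence lies in $\Gamma_n$. Thus $[U]$ is a finite certificate of $\Gamma_n$ with $[U]\preceq C$, and the minimality of $C$ forces $[U]=C$; therefore $|C|=|U|\le\sum_{i=1}^{k}|S_i|=kn$.

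The equality condition then follows quickly. If $k=1$, the two bounds already established give $n\le|C|\le n$, so $|C|=n$. Conversely, if $|C|=n$, the only subcauset of $C$ with exactly $n$ elements is $C$ itself, so $C$ has a single $n$-stem and $\Gamma_n=\{C\}$, i.e.\ $k=1$.

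The only step I expect to require any care is part (ii) --- the claim that forming the union $U$ introduces no $n$-stems beyond $A^1,\dots,A^k$ --- but this is immediate from the transitivity of ``is a stem in'' already recorded in the text, so there is no genuine obstacle. Everything else is routine bookkeeping between the fixed representative $\tilde C$ and its isomorphism class.
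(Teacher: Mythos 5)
Your proof is correct and follows essentially the same route as the paper's: build the union $U$ of stems $S_i\subseteq\tilde C$ realizing the $A^i$, observe that $[U]$ is a certificate of $\Gamma_n$ of cardinality at most $kn$, and invoke minimality to conclude $C=[U]$. You spell out a few steps the paper leaves implicit (that $[U]$ introduces no new $n$-stems, and the converse direction of the equality case), but the argument is the same.
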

\begin{proof}
Consider a labeled representative $\lc{C}$ of $C$. 
For each $A^i$, $i = 1,2,\dots k$,  take a subset of $\lc{C}$ that is a stem in $\lc{C}$, 
isomorphic to $A^i$. Take the union $\lc{U}$ of all those subsets.  $\lc{U}$  is a  stem in $\lc{C}$. 
$\lc{U}$ is isomorphic to a labeled representative of a finite order, $U$, which is a stem in $C$ and 
has cardinality  $|U|  \le kn $. $U$  is also a certificate of $\Gamma_n$ and since
 $C$ is a minimal certificate, $C = U$ and so $|C|  \le kn $.

If $\Gamma_n$ is a singleton set then its single element is the 
unique minimal certificate of $\Gamma_n$ and $|C| = n$. If $\Gamma_n$ is not a singleton then any minimal certificate must have cardinality greater than $n$. 
\end{proof}

\begin{figure}[htpb]
  \centering
	\includegraphics[width=0.66\textwidth]{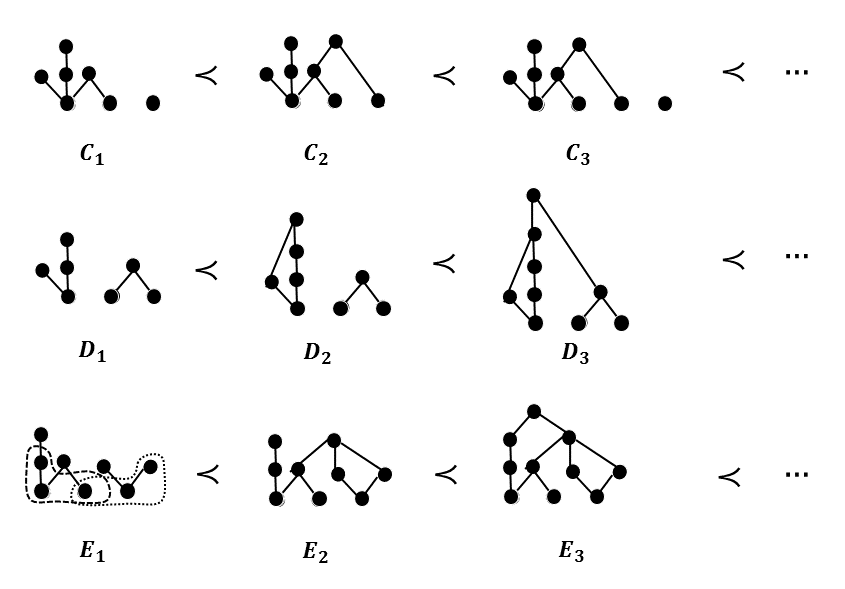}
	\caption{All orders shown in the figure are certificates of $\Omega(3)$. $C_1$, $D_1$ and $E_1$ are minimal certificates of $\Omega(3)$. $C_1$ is a stem in $C_2$, $C_2$ is a stem in $C_3$, and similarly for $D$ and $E$. $|C_1|=|D_1|=7$ and $|E_1|=8$. The dotted outlines on $E_1$ show that the ``L'' order is embedded in $E_1$ in more than one way.}
	\label{min_cert_defns}
\end{figure}

We will also need the concept of a labeled certificate. 
\begin{definition}
A labeled causet $\lc{C}$ is a \textbf{labeled certificate} of $\Gamma_n\in\Lambda$ if $\lc{C}$ is a representative of a certificate of $\Gamma_n$. A labeled causet $\lc{C}$ is a \textbf{labeled minimal certificate} of $\Gamma_n\in\Lambda$ if $\lc{C}$ is a representative of a minimal certificate of $\Gamma_n$.
\end{definition}

An example is shown in figure \ref{labeled_cert_defns}.

\begin{figure}[htpb]
  \centering
	\includegraphics[width=0.44\textwidth]{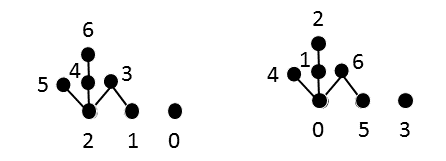}
	\caption{Two labeled minimal certificates of $\Omega(3)$.}
	\label{labeled_cert_defns}
\end{figure}

\subsection{Construction of covtree}\label{subseccovtree}

Given any $\Gamma_n\in\Lambda$, we will be interested in 
 the set of all $k $-stems of elements of $\Gamma_n$ for $k < n$. The following definition will be useful. 
\begin{definition} For any $n$ and any set, $\Gamma_n$, of $n$-orders, the map ${{\mathcal O}}_{-}$  takes $\Gamma_n$ to the set of $(n-1)$-stems of elements of $\Gamma_n$:
\begin{equation} 
{\mathcal O}_{-}(\Gamma_n):=\{B \in \Omega(n-1)\ | \  \exists \ A\in \Gamma_n\ \mathrm{ s.t. }\ B \text{ is a stem in } A \}
\,. \label{o_minus_defn}
\end{equation} 
\end{definition}

One way to think about the operation of ${{\mathcal O}}_{-}$ on $\Gamma_n$  is to take an $n$-order in $\Gamma_n$, choose a maximal element of (a representative of) that $n$-order, and delete that maximal element to form (a representative of) an $(n-1)$-order. The set
${{\mathcal O}}_{-}(\Gamma_n)$ is the set of all $(n-1)$-orders which can be formed in this way. 

\begin{lemma}\label{ominus}
Let $C$ be an $n$-order  and $0<k\le n$. The set of $k$-stems of $C$ is ${{{\mathcal O}}_{-}}^{n-k}(\{\, C \, \})$.

 \end{lemma}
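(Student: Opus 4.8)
The plan is to prove Lemma \ref{ominus} by induction on $n-k$, using Lemma (the ``stem in a stem is a stem'' observation) together with the interpretation of ${\mathcal O}_{-}$ given just before the statement, namely that applying ${\mathcal O}_{-}$ amounts to deleting a maximal element in all possible ways.

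\paragraph{Base case.} When $n-k=0$, i.e.\ $k=n$, the claim is that the set of $n$-stems of the $n$-order $C$ is $\{C\}$. This holds because any $n$-stem of $C$ is a stem of cardinality $n$ inside a representative of $C$, hence is all of that representative, hence equals $C$ as an order; and conversely $C$ is trivially a stem in itself. So ${{\mathcal O}_{-}}^{0}(\{C\})=\{C\}$ is exactly the set of $n$-stems of $C$.

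\paragraph{Inductive step.} Suppose $0<k<n$ and that the claim holds for $k+1$, so that the set of $(k+1)$-stems of $C$ equals $\Gamma_{k+1}:={{\mathcal O}_{-}}^{\,n-k-1}(\{C\})$. I want to show that the set of $k$-stems of $C$ equals ${\mathcal O}_{-}(\Gamma_{k+1})$. For the inclusion ${\mathcal O}_{-}(\Gamma_{k+1})\subseteq\{\text{$k$-stems of }C\}$: any element of ${\mathcal O}_{-}(\Gamma_{k+1})$ is, by definition of ${\mathcal O}_{-}$, a $k$-stem of some $A\in\Gamma_{k+1}$, and $A$ is a $(k+1)$-stem of $C$; since a stem in a stem is a stem, this $k$-order is a $k$-stem of $C$. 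For the reverse inclusion, let $B$ be a $k$-stem of $C$. Fix a labeled representative $\lc{C}$ of $C$ and a subcauset $\lc{B}\subseteq\lc{C}$ that is a stem in $\lc{C}$ and represents $B$. Since $k<n$ there is an element $x\in\lc{C}\setminus\lc{B}$; choose such an $x$ that is minimal in $\lc{C}\setminus\lc{B}$ under $\prec$. Then $\lc{B}\cup\{x\}$ is down-closed in $\lc{C}$ (any element below $x$ lies in $\lc{B}$ or in $\lc{C}\setminus\lc{B}$, but the latter is excluded by minimality of $x$), so it is a stem in $\lc{C}$ of cardinality $k+1$, representing some order $A$; thus $A\in\Gamma_{k+1}$ by the inductive hypothesis. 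Moreover $\lc{B}$ is a $k$-stem in $\lc{B}\cup\{x\}$, so $B$ is a $k$-stem of $A$, whence $B\in{\mathcal O}_{-}(\Gamma_{k+1})$. This closes the induction and gives the set of $k$-stems of $C$ as ${{\mathcal O}_{-}}^{\,n-k}(\{C\})$.

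\paragraph{Main obstacle.} The only genuinely non-trivial point is the reverse inclusion in the inductive step: one must produce, from an arbitrary $k$-stem, a $(k+1)$-stem containing it, i.e.\ ``grow the stem by one element while staying down-closed.'' The key move is to pick a \emph{minimal} element of the complement $\lc{C}\setminus\lc{B}$ rather than an arbitrary one, which guarantees that adding it preserves the down-closed (stem) property. I would also note carefully that all the manipulations take place with a single fixed labeled representative $\lc C$, so that ``stem'' and ``subcauset'' are used unambiguously, and then pass back to orders at the end; the ``stem in a stem is a stem'' fact, already established in the excerpt, is what licenses identifying the result with iterated ${\mathcal O}_{-}$.
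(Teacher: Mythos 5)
Your proof is correct, but it is organized differently from the paper's. The paper gives a direct, non-inductive argument that leans on the natural labeling: it chooses labeled representatives so that the given $k$-stem $\lc{X}$ occupies the ground set $[k-1]$ inside $\lc{C}$, and then observes that deleting the elements $n-1, n-2, \dots, k$ in turn is a sequence of $n-k$ maximal-element deletions landing exactly on $\lc{X}$; the easy converse (iterated deletion of maximal elements always yields a stem) is stated in one line. You instead induct on $n-k$, which forces you to isolate and prove the one-step version of the hard direction: any $k$-stem extends to a $(k+1)$-stem by adjoining a minimal element of its complement, which is the same combinatorial fact the paper's relabeling trick encodes implicitly (the element labeled $k$ is precisely such a minimal extension). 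What your route buys is that the crucial surjectivity step is made fully explicit and does not depend on choosing a convenient labeling, at the cost of a slightly longer argument; the paper's route gets all $n-k$ steps at once but leaves the reader to check that each intermediate deletion is indeed an application of ${\mathcal O}_{-}$. Both are sound.
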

 \begin{proof}
 Consider, $X$, a $k$-stem in $C$. There exist labeled representatives $\lc{X}$ of $X$
and $\lc{C}$ of $C$ such that $\lc{X}$ is a stem in $\lc{C}$.  The ground set of
$\lc{C}$ is the interval $[n-1]$. The $(n-k)$-step process of deleting the elements $n-1$, $n-2$,\dots 
$k$ in turn from $\lc{C}$ results in $\lc{X}$. This shows that $X$  is in 
${{{\mathcal O}}_{-}}^{n-k}(\{\, C \, \})$. And conversely, deleting a maximal element from a representative of $C$  $n-k$ times results in a representative of a $k$-stem of $C$. 
 \end{proof}

\begin{corollary}\label{certdown} Let $\Gamma_{n}$ be a set of $n$-orders. 
 If $C$ is a certificate of $\Gamma_{n}$, then $C$ is also a certificate of ${{\mathcal O}_{-}}^k (\Gamma_{n}) $ for any $k$, $0 \le k < n$. 
 \end{corollary}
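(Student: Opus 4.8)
The plan is to prove the stronger, explicit statement that ${\mathcal O}_-^k(\Gamma_n)$ is precisely the set of all $(n-k)$-stems of $C$; by the definition of certificate this is exactly the assertion that $C$ certifies ${\mathcal O}_-^k(\Gamma_n)$. The case $k=0$ is immediate from the hypothesis, and the restriction $0\le k<n$ guarantees $n-k\ge 1$, so ${\mathcal O}_-^k(\Gamma_n)$ is a legitimate (and, as we note below, non-empty) set of $(n-k)$-orders. The one structural point to keep in mind is that $C$ need not itself be an $n$-order — indeed a minimal certificate of a non-singleton $\Gamma_n$ is strictly larger — so Lemma~\ref{ominus} cannot be applied to $C$ directly; it must be applied to the $n$-orders that make up $\Gamma_n$, and the information then transported to $C$.

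First I would unpack the left-hand side. Since ${\mathcal O}_-$ manifestly distributes over unions, ${\mathcal O}_-^k(\Gamma_n)=\bigcup_{A\in\Gamma_n}{\mathcal O}_-^k(\{A\})$, and by Lemma~\ref{ominus} each ${\mathcal O}_-^k(\{A\})$ is exactly the set of $(n-k)$-stems of the $n$-order $A$. Hence ${\mathcal O}_-^k(\Gamma_n)$ is the set of all orders occurring as an $(n-k)$-stem of some $A\in\Gamma_n$, that is, of some $n$-stem of $C$ (using the hypothesis $\Gamma_n=\{n\text{-stems of }C\}$). Now come the two inclusions. For one direction: if $X$ is an $(n-k)$-stem of an $n$-stem $A$ of $C$, then, since a stem in a stem is a stem, $X$ is an $(n-k)$-stem of $C$. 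For the reverse: given an $(n-k)$-stem $X$ of $C$, fix labeled representatives so that $\lc{X}$ is an $(n-k)$-element stem of $\lc{C}$; because $|C|\ge n$ (as $\Gamma_n\neq\emptyset$) the complement $\lc{C}\setminus\lc{X}$ is non-empty, and since $\lc{C}$ is past-finite it has a minimal element $y$; adjoining $y$ to $\lc{X}$ yields a stem of $\lc{C}$, because every ancestor of $y$, lying below a minimal element of the complement, must already belong to $\lc{X}$. Iterating this $k$ times produces an $n$-element stem $\lc{A}$ of $\lc{C}$ containing $\lc{X}$; its isomorphism class $A$ then lies in $\Gamma_n$ and has $X$ among its $(n-k)$-stems, so $X\in{\mathcal O}_-^k(\Gamma_n)$. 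Combining the inclusions gives that ${\mathcal O}_-^k(\Gamma_n)$ equals the set of all $(n-k)$-stems of $C$ (which is non-empty, since $C$ has at least one $n$-stem and hence at least one $(n-k)$-stem), completing the proof.

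The only step with real content is the stem-extension argument used in the reverse inclusion, namely that any proper stem of $C$ can be enlarged by one element; and even this is routine once one observes that past-finiteness of $C$ — automatic for the infinite orders considered in this paper, and trivial in the finite case — guarantees that the complement of a proper stem has a minimal element. The main thing to be careful about, as flagged above, is therefore not a genuine obstacle but a bookkeeping point: Lemma~\ref{ominus} is invoked for the elements of $\Gamma_n$, not for $C$, and the hypothesis $\Gamma_n=\{n\text{-stems of }C\}$ is what lets the $(n-k)$-stem information be passed back and forth between $C$ and the members of $\Gamma_n$.
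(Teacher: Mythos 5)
Your proof is correct and follows the route the paper intends: the corollary is stated without proof as an immediate consequence of Lemma~\ref{ominus}, and your argument --- distributing ${\mathcal O}_-^k$ over the elements of $\Gamma_n$, using ``a stem in a stem is a stem'' for one inclusion and the one-element stem extension (via a minimal element of the complement, guaranteed by past-finiteness) for the other --- is exactly the elaboration that implication requires. Your care over the fact that Lemma~\ref{ominus} applies to the $n$-orders in $\Gamma_n$ rather than to $C$ itself is well placed, since a certificate $C$ need not be an $n$-order and may even be infinite.
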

 
 The converse is not true: if $C$ is a certificate of ${\mathcal O}_{-}(\Gamma_{n})$, then $C$ may or may not be a certificate of $\Gamma_{n}$. In fact, $\Gamma_{n}$ may have no certificates at all. Examples are shown in figure \ref{o_minus_defn_fig}.

\begin{figure}[htbp]
    \centering
    \begin{subfigure}{12cm}
    \includegraphics[width=.8\textwidth]{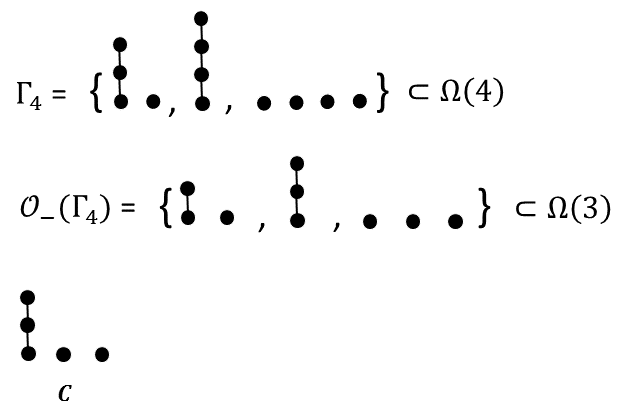}
    \caption{$\Gamma_4$ has no certificates. $C$ is a certificate of ${\mathcal O}_{-}(\Gamma_4)$.}
    \end{subfigure}

    \begin{subfigure}{12cm}
    \includegraphics[width=.8\textwidth]{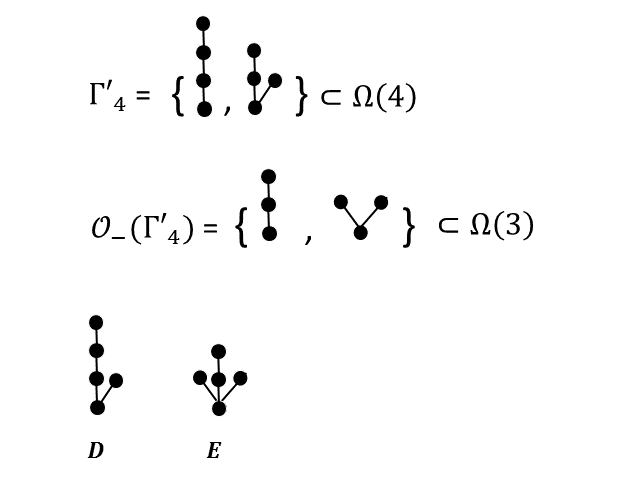}
   \caption{$D$ is a certificate of $\Gamma'_4$, and therefore a certificate of ${\mathcal O}_{-}(\Gamma'_4)$. $E$ is a certificate of ${\mathcal O}_{-}(\Gamma'_4)$ and is not a certificate of $\Gamma'_4$.}
    \end{subfigure}
    \caption{Illustration of the ${{\mathcal O}}_{-}$ operation.}%
    \label{o_minus_defn_fig}%
\end{figure}

We are now ready to define covtree. Recall that $\Lambda$ is the collection of sets of $n$-orders, for all $n$, which have certificates.

\begin{definition} \textbf{Covtree} is the partial order $(\Lambda, \prec)$, where $\Gamma_n\prec\Gamma_{m}$ if and only if $n<m$ and ${\mathcal {O}_{-}}^{m-n}(\Gamma_m)=\Gamma_n$. \end{definition}

Covtree is a tree formed of levels labeled by $\{1,2,3,\dots\}$. The nodes in level $n$ are sets of $n$-orders. A set of $n$-orders, $\Gamma_n$, is a node in level $n$ of covtree if and only if $\Gamma_n$ has a certificate. (This is the motivation for the term certificate: a certificate of $\Gamma_n$ certifies that $\Gamma_n$ is a node in covtree.) 
The partial order on covtree is defined by putting $\Gamma_n$ directly above $O^{-}(\Gamma_n)$, for 
every node $\Gamma_n$, and taking the transitive closure. 

\par  The singleton set containing the one-element order is the root
of covtree.  The first three levels of covtree are shown in figure \ref{covtree}.
There are 22 nodes in level 3 of covtree out of a possible $2^5 - 1 = 31$ non-empty subsets of the set $\Omega(3)$ of $3$-orders. A certificate for each node in level 3 is shown in appendix \ref{certificate_appendix}.  The $9$ ``non-nodes''  in level 3 are given in appendix \ref{certificate_appendix}.

Given a node in covtree, the unique path downwards from it to the root is generated by applying the operator ${\mathcal O}_{-}$ sequentially to the node. 
In particular, every singleton set $\{C\}$ where $C$ is an $n$-order is a node in covtree because $C$ is its certificate and the 
path in covtree down from $\{C\}$ to the root is formed of the nodes ${\mathcal O}_{-}{}^k (\{C\})$, 
$k = 0,1,2,\dots,n-1$. In the upward direction, 
generating the nodes directly above a given $\Gamma_n$ in covtree is a difficult problem.

\begin{figure}[htbp]
    \centering
    \begin{subfigure}{12cm}
    \includegraphics[width=1.2\textwidth]{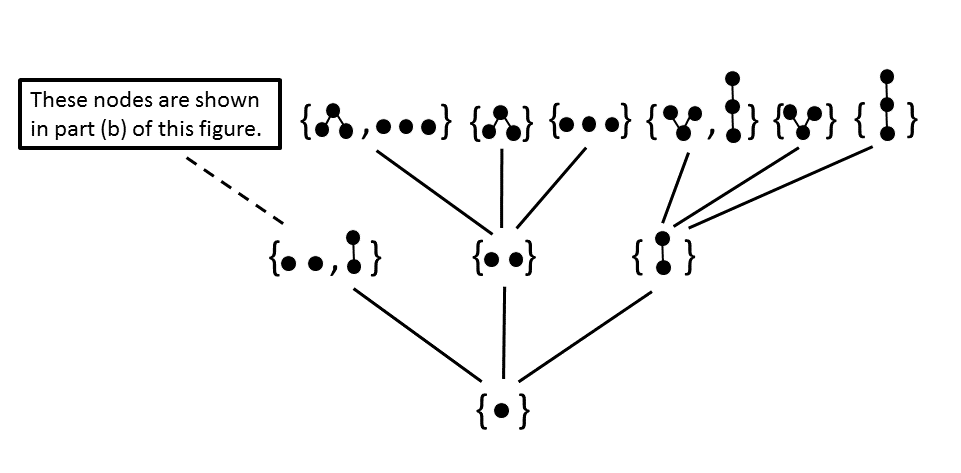}
    \caption{The structure of the first three levels of covtree.}
    \end{subfigure}

    \begin{subfigure}{12cm}
    \includegraphics[width=1.\textwidth]{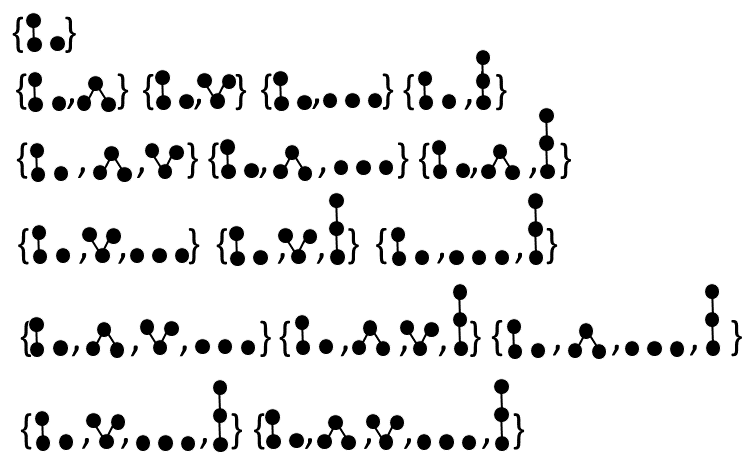}
   \caption{Nodes in level three which are directly above the node $\{$\twoch, \twoach \ \ $\}$.}
    \end{subfigure}
    \caption{The first three levels of covtree.}%
    \label{covtree}%
\end{figure}

\section{Causal set dynamics on covtree}\label{sectheorems}

Covtree allows us to realise the idea described previously of defining a dynamics on a tree in which the nodes in level $n$ are sets of $n$-orders and each node corresponds to the covariant event ``the $n$-stems of the growing order is this set of $n$-orders.''  
Consider a classical dynamical model for a growing causal set as an upward-going random walk 
on covtree, starting at the root.  In preparation for exploring the relationship between paths in covtree  and infinite orders -- the histories in a causal set cosmological model --  we generalise the notion of a certificate of a node to the certificate of a path:
\begin{definition}
An infinite order is a \textbf{certificate of a path} $\mathcal{P}$ in covtree if 
it is a certificate of every node in $\mathcal{P}$.  A \textbf{labeled certificate of a path} $\mathcal{P}$ is a representative of 
a certificate of $\mathcal{P}$.
\end{definition}

One relationship between infinite orders -- elements of ${\Omega}$ -- and paths in covtree  is straightforward to state and understand:
\begin{lemma}\label{claimlevel}
Let  $C$ be an infinite order. The nodes of covtree of which $C$ is a certificate form a path in covtree starting at the root. 
\end{lemma}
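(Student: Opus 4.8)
The plan is to identify, for each level $n$, the unique node of covtree of which $C$ is a certificate, and then to show that these nodes are linked by the operator $\mathcal{O}_-$, so that together they constitute a path descending to the root.

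First I would fix, for each $n\in\mathbb{N}^+$, the set $\Gamma_n:=\{B\in\Omega(n)\mid B\text{ is a stem in }C\}$ of all $n$-stems of $C$. This is a subset of the finite set $\Omega(n)$, and it is non-empty: since $C$ is infinite and past-finite it admits a natural labeling, i.e.\ a representative $\tilde{C}\in\tilde{\Omega}$, and $\tilde{C}|_{[n-1]}$ is a stem in $\tilde{C}$ (if $j\in[n-1]$ and $i\prec j$ then $i<j\le n-1$), whose isomorphism class lies in $\Gamma_n$. By the definition of a certificate, $C$ is a certificate of $\Gamma_n$, so $\Gamma_n\in\Lambda$ and $\Gamma_n$ is a node in level $n$. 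Conversely, if $C$ is a certificate of some node $\Delta$ in level $m$, then by definition $\Delta$ is the set of \emph{all} $m$-stems of $C$, so $\Delta=\Gamma_m$. Hence the nodes of covtree of which $C$ is a certificate are exactly $\Gamma_1,\Gamma_2,\Gamma_3,\dots$, one in each level.

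It then remains to check that $\{\Gamma_n\}_{n\ge 1}$ is a path starting at the root. The set $\Gamma_1$ consists of the one-element order alone, which is the root of covtree. For the links, I would prove $\mathcal{O}_-(\Gamma_{n+1})=\Gamma_n$ for every $n$; by the definition of the covtree order this says $\Gamma_{n+1}$ lies directly above $\Gamma_n$, and since consecutive levels are thereby connected (and distinct, being sets of orders of different cardinalities), the collection $\{\Gamma_n\}_{n\ge1}$ is a single infinite path from the root. The shortest route to $\mathcal{O}_-(\Gamma_{n+1})=\Gamma_n$ is Corollary \ref{certdown}: since $C$ is a certificate of $\Gamma_{n+1}$, it is also a certificate of $\mathcal{O}_-(\Gamma_{n+1})$, which is a non-empty set of $n$-orders; but the only set of $n$-orders certified by $C$ is $\Gamma_n$, so the equality follows. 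Alternatively one can argue directly, with ``$\subseteq$'' being ``a stem in a stem is a stem'', and ``$\supseteq$'' reducing to the claim that every $n$-stem $\tilde{B}$ of a representative $\tilde{C}$ of $C$ extends to an $(n+1)$-stem of $\tilde{C}$: indeed $\tilde{C}\setminus\tilde{B}$ is non-empty and, by past-finiteness, contains an element $x$ that is minimal in $\tilde{C}\setminus\tilde{B}$, whence $past(x)\subseteq\tilde{B}$ and $\tilde{B}\cup\{x\}$ is a stem isomorphic to some order in $\Gamma_{n+1}$ having $B$ as an $n$-stem.

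The only genuinely non-trivial point is this extension step (equivalently, the ``$\supseteq$'' inclusion, which is also implicitly what underlies Corollary \ref{certdown} via Lemma \ref{ominus}): one must rule out the possibility of an $n$-stem of $C$ lying inside no $(n+1)$-stem of $C$, and this is exactly where both the infiniteness and the past-finiteness of $C$ enter. Once that is in hand, the rest is bookkeeping with the definitions of certificate, node, $\mathcal{O}_-$, and the partial order on covtree.
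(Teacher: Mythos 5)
Your proof is correct and follows essentially the same route as the paper's: take $\Gamma_n$ to be the set of $n$-stems of $C$ for each $n$ and use Corollary \ref{certdown} to link consecutive levels into a path down to the root. The extra detail you supply --- non-emptiness of each $\Gamma_n$, uniqueness of the certified node per level, and the stem-extension step underlying the ``$\supseteq$'' inclusion --- is sound and merely makes explicit what the paper leaves implicit.
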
 
\begin{proof} Let $\Gamma_n$ be the set of $n$-stems of $C$, for each $n>0$. $C$ is a certificate of each $\Gamma_n$.  Each $\Gamma_n$ is a node in covtree and
corollary \ref{certdown} shows that these nodes form a path in covtree down to the root. 
\end{proof}

The map from infinite orders to paths in covtree implied in the lemma above is not one-to-one because a rogue order is not specified by its stems: if $C$ and $C'$ are equivalent rogues then 
they are both certificates of the same path in covtree. This means that our stochastic process cannot, in principle, distinguish between equivalent rogues. 

It is not immediately apparent whether or not \textit{every} infinite path in covtree has an infinite order
as a certificate but in fact it is true and we have:
\begin{theorem}
Let $\mathcal{P}$ be an infinite path in covtree starting at the root. There exists 
an infinite order $C$ which is a certificate of $\mathcal{P}$. 
\label{theorem1}
\end{theorem}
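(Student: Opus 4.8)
Write $\mathcal{P}=(\Gamma_1\prec\Gamma_2\prec\Gamma_3\prec\cdots)$, so $\Gamma_n\subseteq\Omega(n)$, each $\Gamma_n$ has a certificate, and ${\mathcal{O}_{-}}^{m-n}(\Gamma_m)=\Gamma_n$ for $n<m$. The goal is to produce a single infinite order $C$ whose set of $n$-stems is exactly $\Gamma_n$ for every $n$; such a $C$ is then a certificate of every node of $\mathcal{P}$. The plan is to build $C$ as an increasing union of finite orders all drawn from the one countable collection $\mathcal{D}:=\bigcup_{n\geq1}\Gamma_n\subseteq\Omega(\mathbb{N})$. Two facts make $\mathcal{D}$ the right object. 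First, $\mathcal{D}$ is closed under passing to stems: if $E\in\Gamma_m$ then by Lemma \ref{ominus} and the relation ${\mathcal{O}_{-}}^{m-k}(\Gamma_m)=\Gamma_k$ every $k$-stem of $E$ lies in $\Gamma_k\subseteq\mathcal{D}$, so any infinite order assembled from members of $\mathcal{D}$ automatically has all of its $n$-stems inside $\Gamma_n$. Second, and crucially, $\mathcal{D}$ is \emph{directed} by the ``stem in'' relation.

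To prove directedness, given $A,B\in\mathcal{D}$ I would pick $N\geq|A|+|B|$ and take a finite certificate $C^{\ast}$ of $\Gamma_N$ (which exists, since $\Gamma_N$ has a certificate and hence a finite one). By Corollary \ref{certdown}, $C^{\ast}$ is a certificate of $\Gamma_j$ for every $j\leq N$; in particular the $|A|$-stems of $C^{\ast}$ are exactly $\Gamma_{|A|}\ni A$ and the $|B|$-stems are exactly $\Gamma_{|B|}\ni B$, so both $A$ and $B$ occur as stems in $C^{\ast}$. Fixing a labeled representative $\lc{C^{\ast}}$ and labeled representatives $\lc{A},\lc{B}$ of $A,B$ that are downward-closed subsets of $\lc{C^{\ast}}$, the union $\lc{A}\cup\lc{B}$ is again downward-closed in $\lc{C^{\ast}}$, hence a stem in $\lc{C^{\ast}}$, and it contains both $\lc{A}$ and $\lc{B}$ as stems. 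Let $E$ be the order of $\lc{A}\cup\lc{B}$; then $|E|\leq|A|+|B|\leq N$, so $E$ is an $|E|$-stem of the certificate $C^{\ast}$ of $\Gamma_{|E|}$, giving $E\in\Gamma_{|E|}\subseteq\mathcal{D}$. Thus $E$ is a common upper bound of $A$ and $B$ in $\mathcal{D}$.

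With directedness established, enumerate $\mathcal{D}=\{A_1,A_2,\dots\}$ (it is countably infinite since every $\Gamma_n$ is non-empty) and build a chain $B_1\preceq B_2\preceq\cdots$ in $\mathcal{D}$ with $B_1=A_1$ and $B_{k+1}$ a common upper bound of $B_k$ and $A_{k+1}$; then every member of $\mathcal{D}$ is a stem in all sufficiently large $B_k$, and since $\Gamma_n\neq\varnothing$ for all $n$ we have $|B_k|\to\infty$. Choosing natural labelings of the $B_k$ compatibly --- exactly as when an infinite path in labeled poscau is identified with an infinite labeled causet, so that each $B_k$ is realised as the initial segment of the next --- their union is an infinite labeled causet $\lc{C}$; set $C=[\lc{C}]$. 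Now fix $n$: any $n$-stem $S$ of $C$ is finite, hence an $n$-stem of some $B_k$, and $B_k\in\mathcal{D}$ has all of its $n$-stems in $\Gamma_n$, so $S\in\Gamma_n$; conversely every $A\in\Gamma_n$ equals some $A_i$, hence is a stem in $B_k$ for large $k$ and therefore a stem in $C$, so $A$ is an $n$-stem of $C$. Thus the $n$-stems of $C$ are precisely $\Gamma_n$ for every $n$, i.e.\ $C$ is a certificate of $\mathcal{P}$. (In general $C$ may be a rogue, consistent with the remark after Lemma \ref{claimlevel}.)

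The main obstacle, and essentially the only step needing care, is the amalgamation used for directedness: the certificate $C^{\ast}$ must be taken at a level $N$ large enough that the amalgam $E$ does not overshoot into a level whose stem set $C^{\ast}$ fails to control --- choosing $N\geq|A|+|B|$ settles this. The remaining ingredients are routine: unions of downward-closed subsets are downward-closed, a $k$-stem of a finite causet can be taken to be the initial segment $[k-1]$ under a suitable natural labeling, and nested initial segments glue to an infinite (automatically past-finite) labeled causet.
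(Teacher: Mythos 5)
Your proof is correct, and while it shares the paper's basic mechanism --- build a nested sequence of finite labeled causets drawn from the nodes of $\mathcal{P}$ and take the union --- the scaffolding is genuinely different. The paper's route goes through Lemmas \ref{claim_theorem_1_prep}--\ref{claim3_theorem_1_prep}: using the cardinality bound on minimal certificates (Lemma \ref{cardmincert}) it shows that some node $\Gamma_m$ above $\Gamma_n$ contains a \emph{certificate} of $\Gamma_n$ as an element, and then iterates this to get a chain $\lc{C}_{m_1}\subset\lc{C}_{m_2}\subset\cdots$ in which each term is already a certificate of all nodes up to the previous level. You instead work with the single directed set $\mathcal{D}=\bigcup_n\Gamma_n$: your chain elements $B_k$ are merely \emph{elements} of nodes (not certificates of anything in general), and certification is only verified in the limit, with the two inclusions handled separately by closure of $\mathcal{D}$ under stems (giving $n$-stems of $C\subseteq\Gamma_n$) and cofinality of the chain (giving $\Gamma_n\subseteq$ $n$-stems of $C$). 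Both arguments ultimately rest on the same trick --- a $j$-stem of a certificate of $\Gamma_N$ with $j\le N$ is an element of $\Gamma_j$, via Corollary \ref{certdown} --- but you apply it to the two-element amalgam $\lc{A}\cup\lc{B}$ with the bound $N\ge|A|+|B|$, whereas the paper applies it to a minimal certificate with the bound $N=n|\Gamma_n|$. Your version dispenses with the minimal-certificate machinery entirely and makes the final verification more explicit than the one-line proof of the paper's last lemma; the paper's version has the feature that each finite stage already certifies an initial segment of the path, which fits the growth heuristic more closely. All the steps you flag as needing care (amalgams of stems are stems, relabeling so that a stem becomes an initial segment, nested initial segments gluing to a past-finite infinite causet) do go through.
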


To prove theorem \ref{theorem1} we  will demonstrate  an algorithm to generate a \textit{labeled} certificate of any 
path $\mathcal{P}$. The isomorphism class of this  labeled certificate is then the desired order. We begin with some lemmas. 

\begin{lemma}\label{claim_theorem_1_prep} Let $\mathcal{P}=\{\Gamma_1, \Gamma_2, \Gamma_3\dots\}$ be a path in covtree and let $\Gamma_n\in \mathcal{P}$ not be a singleton. Then there exists a node in $\mathcal{P}$ above $\Gamma_n$ that contains a certificate of $\Gamma_n$ as an element.  
In other words, there exists an $m> n$ and 
an $m$-order $C$ such that  $C$ is a certificate of $\Gamma_n$ and $C\in \Gamma_m\in\mathcal{P}$. 
\end{lemma}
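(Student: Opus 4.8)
The plan is to produce the required certificate explicitly, by working inside a single labeled certificate taken from far enough up the path. Write $\mathcal{P}=\{\Gamma_1,\Gamma_2,\dots\}$ (an infinite path from the root meets every level of covtree exactly once), and put $k:=|\Gamma_n|$, so $k\ge 2$ since $\Gamma_n$ is not a singleton. First I would go up to level $N:=kn$; note $N>n$ because $k\ge 2$. Let $E$ be a certificate of the node $\Gamma_N\in\mathcal{P}$ (one exists since $\Gamma_N\in\Lambda$). The first observation is that, since $\mathcal{O}_-^{\,N-j}(\Gamma_N)=\Gamma_j$ for every $j$ with $1\le j\le N$ along the path, Corollary \ref{certdown} makes $E$ simultaneously a certificate of \emph{all} the nodes $\Gamma_1,\dots,\Gamma_N$ of $\mathcal{P}$; in particular $E$ is a certificate of $\Gamma_n$, i.e.\ the set of $n$-stems of $E$ is exactly $\Gamma_n$.

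Next I would build the certificate by the gluing construction already used in the proof of Lemma \ref{cardmincert}. Fixing a labeled representative $\lc E$ of $E$, for each $A^i\in\Gamma_n$ choose a stem $\lc S_i$ in $\lc E$ isomorphic to a representative of $A^i$, and set $\lc U:=\bigcup_{i=1}^{k}\lc S_i$. As a union of stems of $\lc E$, $\lc U$ is a stem in $\lc E$, and $|\lc U|\le\sum_i|\lc S_i|=kn=N$. Let $U:=[\lc U]$ and $m:=|U|$. Each $\lc S_i$ is a stem in $\lc U$, so $\Gamma_n$ is contained in the set of $n$-stems of $U$; conversely, a stem in a stem is a stem, so every $n$-stem of $U$ is an $n$-stem of $E$ and hence lies in $\Gamma_n$. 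Thus $U$ is a certificate of $\Gamma_n$, and since $\Gamma_n$ is not a singleton we get $m>n$ (an order of cardinality $n$ has only itself as an $n$-stem).

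Finally I would identify which node of $\mathcal{P}$ contains $U$. Since $U$ is a stem in $E$ of cardinality $m$, it is an $m$-stem of $E$; and $m\le N$, so by Corollary \ref{certdown} again, $E$ is also a certificate of $\Gamma_m=\mathcal{O}_-^{\,N-m}(\Gamma_N)$, which means $\Gamma_m$ is precisely the set of $m$-stems of $E$. Hence $U\in\Gamma_m\in\mathcal{P}$ with $m>n$, and $C:=U$ is the required certificate. The only real subtlety -- the point I would be careful to get right -- is the choice of $E$: it must be a certificate of a node of the \emph{path} lying above $\Gamma_n$, not merely an arbitrary certificate of $\Gamma_n$, because it is exactly the agreement ``$j$-stems of $E$ $=\Gamma_j\in\mathcal{P}$ for all $j\le N$'' that forces the glued order $U$ onto the path; and Lemma \ref{cardmincert} is what guarantees the level $N=kn$ is high enough for the whole gluing to fit inside $E$.
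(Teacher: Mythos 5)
Your proof is correct and follows essentially the same route as the paper's: go up to level $N=n|\Gamma_n|$, take a certificate of $\Gamma_N$ (which by Corollary \ref{certdown} certifies every node of $\mathcal{P}$ below it, in particular $\Gamma_n$), locate inside it a certificate of $\Gamma_n$ of some cardinality $m$ with $n<m\le N$ occurring as a stem, and conclude that it lies in $\Gamma_m\in\mathcal{P}$. The only difference is cosmetic: the paper picks a minimal certificate of $\Gamma_n$ occurring as a stem in the chosen certificate of $\Gamma_N$, whereas you rerun the union-of-stems gluing from the proof of Lemma \ref{cardmincert} explicitly, which amounts to the same bound and the same conclusion.
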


\begin{proof} By lemma {\ref{cardmincert}},  the cardinality of any minimal certificate, $C$, of $\Gamma_n$ satisfies $n< \vert C\vert \le N$ where $N := n |\Gamma_n|$. Consider $\Gamma_N\in \mathcal{P}$ and let $D$ be a finite certificate of $\Gamma_N$.  By corollary \ref{certdown},  $D$ is a certificate of every node below $\Gamma_N$ so 
$D$ is a certificate of $\Gamma_n$.  
Now, at  least one minimal certificate of $\Gamma_n$ occurs as a stem in $D$. Choose one, call it $C$, let $m := \vert C \vert$ and consider 
$\Gamma_m\in \mathcal{P}$.  $C$ is an $m$-stem in $D$.  $\Gamma_m$
is the set of all $m$-stems of $D$ and so $C$ is an element of $\Gamma_m$.  
\end{proof}

Note the choices made in the proof above: a choice of a particular certificate of $\Gamma_N$ and a choice of a stem in it which is a minimal certificate of $\Gamma_n$.

\begin{lemma}\label{claim4_theorem_1_prep} Let $\mathcal{P}=\{\Gamma_1, \Gamma_2, ...\}$ be a path in covtree and let $\Gamma_n\in \mathcal{P}$. There is a node in $\mathcal{P}$ above $\Gamma_n$
 which has a certificate of $\Gamma_n$ as an element. 
\end{lemma}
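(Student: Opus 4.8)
The plan is to reduce this to the previous lemma by handling the singleton case separately. If $\Gamma_n$ is a singleton, say $\Gamma_n = \{C\}$ with $C$ an $n$-order, then $C$ itself is a certificate of $\Gamma_n$, and $C$ is already an element of $\Gamma_n$, which is a node of $\mathcal{P}$; so the node $\Gamma_n$ (or any node above it, taking $m = n$) trivially has a certificate of $\Gamma_n$ as an element, and we are done. If $\Gamma_n$ is not a singleton, then Lemma \ref{claim_theorem_1_prep} applies directly and gives exactly the conclusion: there is an $m > n$ and an $m$-order $C$ with $C$ a certificate of $\Gamma_n$ and $C \in \Gamma_m \in \mathcal{P}$.

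So the only content here is the bookkeeping that merges the two cases into the single uniform statement. I would write it as: let $\Gamma_n \in \mathcal{P}$. If $\Gamma_n$ is a singleton $\{C\}$, then $C$ is an $n$-order, it is a certificate of $\Gamma_n$ by definition of certificate, and $C \in \Gamma_n$, so the node $\Gamma_n$ itself satisfies the claim (with ``above'' read inclusively, or one may pass to $\Gamma_{n+1} \in \mathcal{P}$, which also contains $C$ as an element since deleting the top element of a larger certificate and closing up still reaches back to $C$ — but the cleanest route is simply to allow the node $\Gamma_n$). Otherwise invoke Lemma \ref{claim_theorem_1_prep}.

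There is essentially no obstacle; the lemma is a clean-up restatement of Lemma \ref{claim_theorem_1_prep} that removes its non-singleton hypothesis, presumably so that the subsequent algorithm in the proof of Theorem \ref{theorem1} can be applied uniformly to every node of the path without a case split at each stage. The only point requiring a moment's care is the meaning of ``above'': if the paper intends ``above'' strictly (a node $\Gamma_m$ with $m > n$), then in the singleton case I would note that $\Gamma_{n+1} \in \mathcal{P}$ has a certificate $D$ (a finite $(n+1)$-order or larger), and since $\mathcal{O}_-^{\,}$ applied enough times to $D$ returns $\Gamma_n = \{C\}$, every certificate of $\Gamma_{n+1}$ is a certificate of $\Gamma_n$ by Corollary \ref{certdown}; but to land the certificate \emph{as an element} of a node one still argues as in Lemma \ref{claim_theorem_1_prep}, picking a minimal certificate of $\Gamma_n$ sitting as a stem inside $D$ — and in the singleton case that minimal certificate is just $C$ with $|C| = n$, which sits in $\Gamma_n$ itself. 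Either way the statement holds; I would simply present the singleton case as immediate and defer to Lemma \ref{claim_theorem_1_prep} for the rest.
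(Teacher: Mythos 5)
Your case split is the same as the paper's, and the non-singleton case is handled identically, by invoking Lemma \ref{claim_theorem_1_prep}. The singleton case, however, is not handled correctly as written. This lemma is consumed in step $k$.1 of the proof of Lemma \ref{claim3_theorem_1_prep}, which requires $m_k > m_{k-1}$ \emph{strictly}; reading ``above'' inclusively and pointing at $\Gamma_n$ itself therefore does not deliver what is needed, since the algorithm would stall whenever it reached a singleton node. Your fallback for the strict reading then contains a false claim: $\Gamma_{n+1}$ cannot ``contain $C$ as an element,'' because $\Gamma_{n+1}$ is a set of $(n+1)$-orders and $C$ is an $n$-order. Similarly, ``picking a minimal certificate of $\Gamma_n$ sitting as a stem inside $D$'' and observing that in the singleton case it is just $C$ with $|C|=n$ lands you back in $\Gamma_n$, not in a node strictly above it.

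The correct one-line argument for the singleton case --- and the one the paper gives --- is that \emph{every element of $\Gamma_{n+1}$ is a certificate of $\Gamma_n$}. Indeed, if $\Gamma_n=\{C\}$ and $A\in\Gamma_{n+1}$, then the set of $n$-stems of $A$ is a nonempty subset of $\mathcal{O}_{-}(\Gamma_{n+1})=\Gamma_n=\{C\}$, hence equals $\{C\}$; so $A$ is a certificate of $\Gamma_n$, and $\Gamma_{n+1}$ is a node strictly above $\Gamma_n$ containing a certificate of $\Gamma_n$ as an element. Note that this argument uses the singleton hypothesis essentially: for non-singleton $\Gamma_n$ an element of $\Gamma_{n+1}$ need not contain every member of $\Gamma_n$ as a stem, which is exactly why Lemma \ref{claim_theorem_1_prep} is needed in the other case.
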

\begin{proof}
 In the case that $\Gamma_n$ is not a singleton,  a node with the required property 
is $\Gamma_m$ as defined in the proof of lemma \ref{claim_theorem_1_prep}. 
In the case that $\Gamma_n$ is a singleton, then 
a node with the required property is $\Gamma_{n+1}$ because every element of $\Gamma_{n+1}$ is a certificate of $\Gamma_{n}$. 
\end{proof}

\begin{lemma}\label{claim3_theorem_1_prep}
Let  $\mathcal{P}=\{\Gamma_1, \Gamma_2, ...\}$ be an infinite path in covtree. There exists an infinite subsequence 
$m_1< m_2< m_3<\dots$ of the natural numbers, and a set of labeled causal sets $\{\lc{C}_{m_1}, \lc{C}_{m_2}, \lc{C}_{m_3} \dots\}$ 
such that, for all $k$, 
\begin{itemize}
\item[(i)] $ \vert \lc{C}_{m_k} \vert = m_{k} $; 
\item[(ii)] $ \lc{C}_{m_k} $ is a subcauset, a stem, in $ \lc{C}_{m_{k+1}}$; 
\item[(iii)] $\lc{C}_{m_{k+1}}$ is a labeled certificate of $\Gamma_{m_k}$ and also therefore a labeled certificate
of all nodes below 
$\Gamma_{m_k}$;
\item[(iv)] ${C}_{m_{k+1}}$, the isomorphism class of  $\lc{C}_{m_{k+1}}$, 
 is an element of $\Gamma_{m_{k+1}}$.
\end{itemize}
\end{lemma}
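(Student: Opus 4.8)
The plan is to construct the subsequence $m_1<m_2<\dots$ and the causets $\lc{C}_{m_1},\lc{C}_{m_2},\dots$ recursively, invoking Lemma \ref{claim4_theorem_1_prep} at each stage to jump higher up the path, and carrying along as an inductive hypothesis the extra statement ``$[\lc{C}_{m_k}]\in\Gamma_{m_k}$'' (which is just condition (iv) for the previous index). For the base case take $m_1=1$ and let $\lc{C}_{m_1}$ be the unique element of $\tilde{\Omega}(1)$; since $\Gamma_1$, the level-one node of $\mathcal{P}$, is the root of covtree, we have $[\lc{C}_{m_1}]\in\Gamma_1$, so the hypothesis holds at $k=1$.

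For the inductive step, suppose $m_k$ and $\lc{C}_{m_k}\in\tilde{\Omega}(m_k)$ have been chosen with $[\lc{C}_{m_k}]\in\Gamma_{m_k}$. Apply Lemma \ref{claim4_theorem_1_prep} to $\Gamma_{m_k}$: there is a node $\Gamma_{m_{k+1}}$ strictly above $\Gamma_{m_k}$ in $\mathcal{P}$ and an order $C_{m_{k+1}}\in\Gamma_{m_{k+1}}$ which is a certificate of $\Gamma_{m_k}$. In particular $m_{k+1}>m_k$, so iterating produces a genuine infinite strictly increasing subsequence of $\mathbb{N}$. Since $C_{m_{k+1}}$ is a certificate of $\Gamma_{m_k}$, the set $\Gamma_{m_k}$ is exactly the set of $m_k$-stems of $C_{m_{k+1}}$, and because $[\lc{C}_{m_k}]\in\Gamma_{m_k}$ it follows that $[\lc{C}_{m_k}]$ occurs as a stem in $C_{m_{k+1}}$ -- this is the fact I will need in the next paragraph. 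Conditions (i), (iii), (iv) now hold for \emph{any} labeled representative of $C_{m_{k+1}}$: its cardinality is $m_{k+1}$; it is a labeled certificate of $\Gamma_{m_k}$ and hence, by Corollary \ref{certdown}, of every node of $\mathcal{P}$ below $\Gamma_{m_k}$; and its isomorphism class is $C_{m_{k+1}}\in\Gamma_{m_{k+1}}$, which also re-establishes the inductive hypothesis for $k+1$.

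The one point needing care, and the step I expect to be the main obstacle, is condition (ii): we must produce a labeled representative $\lc{C}_{m_{k+1}}$ of $C_{m_{k+1}}$ inside which $\lc{C}_{m_k}$ itself -- with its fixed ground set $[m_k-1]$ -- literally sits as a stem-subcauset, not merely up to isomorphism. I would handle this by choosing the representative carefully. Starting from any representative of $C_{m_{k+1}}$ in which some downward-closed subset $T$ of the ground set carries the order of $\lc{C}_{m_k}$ (such a representative exists precisely because $[\lc{C}_{m_k}]$ is a stem in $C_{m_{k+1}}$), relabel the ground set: assign the labels $0,1,\dots,m_k-1$ to the elements of $T$ via the isomorphism onto $\lc{C}_{m_k}$, and give the remaining elements the labels $m_k,\dots,m_{k+1}-1$ via any linear extension of the order they inherit. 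Because $T$ is downward closed, no element outside $T$ lies below an element of $T$, so putting all of $T$ first yields a natural labeling; the relabeled causet therefore lies in $\tilde{\Omega}(m_{k+1})$, represents $C_{m_{k+1}}$, and restricts on the downward-closed set $[m_k-1]$ to exactly $\lc{C}_{m_k}$. Taking $\lc{C}_{m_{k+1}}$ to be this causet completes the inductive step, and the lemma follows.
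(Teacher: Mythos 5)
Your proposal is correct and follows essentially the same route as the paper: a recursive construction that invokes Lemma \ref{claim4_theorem_1_prep} at each stage, carries ``$C_{m_k}\in\Gamma_{m_k}$'' as the inductive hypothesis to conclude that $C_{m_k}$ is an $m_k$-stem of the certificate $C_{m_{k+1}}$, and then chooses a labeled representative of $C_{m_{k+1}}$ extending $\lc{C}_{m_k}$. Your explicit relabeling argument for condition (ii) is a detail the paper merely asserts (``pick a representative such that\dots''), and it is carried out correctly.
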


\begin{proof} \label{proof_theorem_1}
Consider an infinite path $\mathcal{P}=\{\Gamma_1, \Gamma_2, ...\}$ in covtree. The required sequence of causal sets $\{\lc{C}_{m_1}, \lc{C}_{m_2}, \lc{C}_{m_3} \dots\}$ is constructed by the following inductive algorithm.
\newline \textit{ Step 1:}
\newline 1.0) Pick some nonzero natural number $m_0$ to start and 
consider $\Gamma_{m_0}\in \mathcal{P}$.
\newline 1.1) By  lemma \ref{claim4_theorem_1_prep}  there exists an $m_1$ such that $m_1 > m_0$ 
and such that $\Gamma_{m_1} $
contains a certificate of $\Gamma_{m_0}$ as an element. Call that certificate $C_{m_1}$.
Its cardinality is $\vert C_{m_1} \vert = m_1$. 
\newline 1.2) Pick $\lc{C}_{m_1}$,  a labeled causet which is a representative of $C_{m_1}$.
\newline 1.3) Go to step 2.
\newline \textit{ Step} $k>1$:
\newline k.1)   By  lemma \ref{claim4_theorem_1_prep}  there exists an $m_{k}$ such that $m_{k} > m_{k-1}$ 
and such that $\Gamma_{m_{k}} $
contains a certificate of $\Gamma_{m_{k-1}}$ as an element. Call that certificate $C_{m_k}$.
Its cardinality is $\vert C_{m_k} \vert = m_{k}$.  
\newline k.2)  Consider $C_{m_{k-1}}$ and its labeled representative $\lc{C}_{m_{k-1}}$ from the previous step. 
$C_{m_{k-1}}$  is an element of $\Gamma_{m_{k-1}}$.
Because $C_{m_k}$ is a certificate of 
$\Gamma_{m_{k-1}}$,  $C_{m_{k-1}}$ is an $m_{k-1}$-stem of $C_{m_{k}}$. 
Pick a representative  $\lc{C}_{m_k}$ of  ${C}_{m_k}$ such that 
$\lc{C}_{m_{k-1}}$ from the previous step is a sub-causet of $\lc{C}_{m_{k}}$.
\newline k.3) Go to step $k+1$.
\newline The subsequence $m_1< m_2< m_3<\dots$ of the natural numbers, and the set of labeled causal sets $\{\lc{C}_{m_1}, \lc{C}_{m_2}, \lc{C}_{m_3} \dots\}$ have the required properties by construction. 
 \end{proof}
 
 \begin{lemma} An infinite path in covtree has a labeled certificate. 
\end{lemma}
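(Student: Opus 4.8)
The plan is to use Lemma \ref{claim3_theorem_1_prep} to extract an infinite nested sequence of labeled causets and take their union. Concretely, apply Lemma \ref{claim3_theorem_1_prep} to the given infinite path $\mathcal{P}$ to obtain the subsequence $m_1 < m_2 < m_3 < \dots$ and the labeled causets $\lc{C}_{m_1}, \lc{C}_{m_2}, \dots$ with properties (i)--(iv). By property (ii), each $\lc{C}_{m_k}$ is a stem (hence a subcauset) of $\lc{C}_{m_{k+1}}$, so the sequence is nested: $\lc{C}_{m_1} \subseteq \lc{C}_{m_2} \subseteq \lc{C}_{m_3} \subseteq \dots$. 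Define $\lc{C} := \bigcup_{k} \lc{C}_{m_k}$. Since the ground set of each $\lc{C}_{m_k}$ is $[m_k - 1]$ and these intervals are nested and exhaust $\mathbb{N}$, the ground set of $\lc{C}$ is $\mathbb{N}$, and the order relation of $\lc{C}$ restricted to each $\lc{C}_{m_k}$ agrees with that of $\lc{C}_{m_k}$ (because each is a subcauset of the next, relations are neither created nor destroyed on the common ground set). Thus $\lc{C}$ is a well-defined infinite labeled causet in $\tilde{\Omega}$.

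The substantive step is to check that $\lc{C}$ is a labeled certificate of every node $\Gamma_n$ in $\mathcal{P}$, i.e., that for each $n$, the set of $n$-stems of $\lc{C}$ (equivalently, of the order $[\lc{C}]$) is exactly $\Gamma_n$. First I would show every element of $\Gamma_n$ occurs as an $n$-stem of $\lc{C}$: given $n$, pick any $m_k > n$ (so $\Gamma_n$ lies below $\Gamma_{m_k}$ in $\mathcal{P}$); by property (iii), $\lc{C}_{m_{k+1}}$ is a labeled certificate of $\Gamma_n$, so every element of $\Gamma_n$ is an $n$-stem of $\lc{C}_{m_{k+1}}$, and since $\lc{C}_{m_{k+1}}$ is a stem in $\lc{C}$, ``a stem in a stem is a stem'' gives that it is an $n$-stem of $\lc{C}$. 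Conversely, I must show $\lc{C}$ has no $n$-stems outside $\Gamma_n$. The key observation is that any $n$-stem $\lc{X}$ of $\lc{C}$ is a finite subcauset, so it is contained in $\lc{C}_{m_k}$ for some $k$ large enough (its finitely many ground-set elements all lie in $[m_k-1]$), and $\lc{X}$ is a stem in $\lc{C}$; because $\lc{C}_{m_k}$ is itself a stem in $\lc{C}$ with ground set $[m_k-1] \supseteq$ the support of $\lc{X}$, it follows that $\lc{X}$ is a stem in $\lc{C}_{m_k}$. Since $\lc{C}_{m_k}$ is a representative of an element of $\Gamma_{m_k}$ (property (iv)) and $\Gamma_n = \mathcal{O}_-^{m_k - n}(\Gamma_{m_k})$ is by definition the set of all $n$-stems of elements of $\Gamma_{m_k}$ (Lemma \ref{ominus}), the $n$-stem $[\lc{X}]$ of $[\lc{C}_{m_k}]$ lies in $\Gamma_n$.

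I expect the main obstacle to be the care needed in the converse direction: one must be precise that ``$n$-stem of $\lc{C}$'' is a statement about the order $[\lc{C}]$, and that a stem of $\lc{C}$ living on a ground subset of $[m_k-1]$ is genuinely a stem of $\lc{C}_{m_k}$ — this uses that $\lc{C}_{m_k}$ is downward-closed in $\lc{C}$ and that the past of any element is computed the same way in both. The rest is bookkeeping: well-definedness of the union and the forward inclusion are routine given Lemma \ref{claim3_theorem_1_prep}. Once $\lc{C}$ is shown to certify every $\Gamma_n \in \mathcal{P}$, it is by definition a labeled certificate of the path $\mathcal{P}$, completing the proof (and, passing to $[\lc{C}]$, also proving Theorem \ref{theorem1}).
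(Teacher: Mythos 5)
Your proposal is correct and follows the paper's own route exactly: take the union of the nested sequence of labeled causets supplied by Lemma \ref{claim3_theorem_1_prep} and check it certifies every node of the path. The paper states this in a single sentence, and your additional verification (well-definedness of the union, and the two inclusions showing the $n$-stems of the union are precisely $\Gamma_n$) is just a careful spelling-out of what the paper leaves implicit.
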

\begin{proof} The union of the nested sequence of labeled causets $\{\lc{C}_{m_1}, \lc{C}_{m_2}, \lc{C}_{m_3} \dots\}$ of the previous lemma is a labeled certificate of the path. 
\end{proof}
\begin{corollary}\label{corollary2}
An infinite path in covtree has a certificate. 
\end{corollary}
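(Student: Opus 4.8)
The plan is to read the corollary off directly from the lemma that precedes it, which has already produced a \emph{labeled} certificate of an arbitrary infinite path, and then simply translate to the unlabeled setting by passing to an isomorphism class. Given an infinite path $\mathcal{P}$ in covtree, that lemma (built on Lemma~\ref{claim3_theorem_1_prep}, in turn on Lemmas~\ref{claim_theorem_1_prep} and \ref{claim4_theorem_1_prep}) supplies a sequence of finite labeled causets $\lc{C}_{m_1}, \lc{C}_{m_2}, \dots$, each one a stem (in particular a subcauset) of the next, with $|\lc{C}_{m_k}| = m_k$ and with $\lc{C} := \bigcup_k \lc{C}_{m_k}$ a labeled certificate of $\mathcal{P}$. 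Since ``labeled certificate of a path'' was defined to mean ``representative of a certificate of the path'', the isomorphism class $C := [\lc{C}]$ is, by that definition, a certificate of $\mathcal{P}$; so the only thing left to confirm is that $C$ really is an \emph{infinite} order, i.e. that $C \in \Omega$.

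For that, I would check that $\lc{C}$ is a bona fide element of $\tilde{\Omega}$: the ground sets of the $\lc{C}_{m_k}$ are the nested intervals $[m_k-1]$, and because the subsequence $m_1 < m_2 < \dots$ is infinite their union is all of $\mathbb{N}$, so $\lc{C}$ is a partial order on $\mathbb{N}$ with $i \prec j \implies i < j$ (this relation is inherited unchanged from the pieces, since each $\lc{C}_{m_k}$ is an induced suborder of the next). Hence $\lc{C}$ is an infinite labeled causet — automatically past-finite, as remarked in the preliminaries — and its class $C = [\lc{C}]$ is an infinite order which is a certificate of every node of $\mathcal{P}$. That is precisely the assertion of the corollary, and it also closes the loop on Theorem~\ref{theorem1}, which unfolds via the definition of a certificate of a path into this same statement.

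I do not anticipate any genuine obstacle at this step: all of the real work — the repeated appeals to Lemmas~\ref{claim_theorem_1_prep} and \ref{claim4_theorem_1_prep}, the inductive choice of compatibly nested labeled representatives, and the verification that for each fixed $j$ the $j$-stems of the union stabilise exactly to $\Gamma_j$ — has already been done in the lemmas preceding the corollary. The only point that merits a sentence of care is the well-definedness of $\bigcup_k \lc{C}_{m_k}$ as an object of $\tilde{\Omega}$, and that is immediate once one notes the ground sets are nested initial intervals of $\mathbb{N}$ whose union (the subsequence being infinite) is $\mathbb{N}$ itself.
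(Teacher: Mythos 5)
Your proposal is correct and follows essentially the same route as the paper: the corollary is obtained by taking the union of the nested labeled causets from Lemma~\ref{claim3_theorem_1_prep}, observing it is a labeled certificate of the path, and passing to its isomorphism class. The extra sentence you add verifying that the union is a genuine element of $\tilde{\Omega}$ (nested initial intervals exhausting $\mathbb{N}$) is a detail the paper leaves implicit, but it is not a different argument.
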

This corollary is theorem \ref{theorem1}.

\subsection{Measures on $\mathcal{R}(\mathcal{S})$}\label{subsec_measures_on_R(S)}

We propose random walks upwards on covtree as dynamical models in which an order grows and in which arriving at a node $\Gamma_n$ corresponds to the occurrence of the event ``the set of $n$-stems of the order is $\Gamma_n$.''
A question that arises is: what is the relationship between dynamical models on covtree and dynamical models on labeled poscau? Do the kinematical structures of covtree and labeled poscau give rise to different classes of causet growth models?
We will show that the set of measures induced on $\mathcal{R}(\mathcal{S})$ by walks on labeled poscau and the set of measures induced on $\mathcal{R}(\mathcal{S})$ by walks on covtree are equal, and equal to the set of all measures on $\mathcal{R}(\mathcal{S})$.

\par First we introduce the covtree measure space. The certificate set, $cert(\Gamma_n)$, of a node, $\Gamma_n$, in covtree is the subset \begin{equation} cert(\Gamma_n):=\{ C\in\Omega \mid C \text{ is a certificate of } \Gamma_n\}. \end{equation}

The node certificate sets are the covtree ``cylinder sets''. Let $\Sigma$ denote the set of node certificate sets $cert(\Gamma_n)$ for all nodes in covtree, together with the empty set. A random walk on covtree, defined by the transition probabilities for each link in covtree satisfying the Markov sum rule, gives a measure $\mu$ on $\Sigma$, where $\mu(cert(\Gamma_n))$ is the product of the transition probabilities on the links of the path from the root to $\Gamma_n$. The tree structure of covtree means that $\Sigma$ is a semi-ring and that a measure $\mu$ on $\Sigma$ generated by a set of Markovian transition probabilities on covtree is countably-additive \footnote{This is standard measure theory for stochastic processes. For completeness, we present proofs in appendix \ref{appendix_subsec_R(S)}.}.
Hence we can apply the Fundamental Theorem of Measure Theory \cite{Kolmogorov:1975} which says that the measure $\mu$ extends to $\mathcal{R}(\Sigma)$, the $\sigma$-algebra generated by $\Sigma$.

We are now in a position to prove that:

\begin{lemma}\label{lemma_100901} $\mathcal{R}(\mathcal{S})=\mathcal{R}(\Sigma)$. \end{lemma}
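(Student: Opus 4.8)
The plan is to show the two $\sigma$-algebras coincide by proving that each generating collection is contained in the other's generated $\sigma$-algebra: it suffices to show $\mathcal{S}\subseteq\mathcal{R}(\Sigma)$ and $\Sigma\subseteq\mathcal{R}(\mathcal{S})$, since then $\mathcal{R}(\mathcal{S})\subseteq\mathcal{R}(\Sigma)$ and $\mathcal{R}(\Sigma)\subseteq\mathcal{R}(\mathcal{S})$. Both containments should follow by expressing a single stem set as a countable Boolean combination of node certificate sets and vice versa.

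First I would relate $stem(A)$, for an $n$-order $A$, to certificate sets. An infinite order $C$ lies in $stem(A)$ precisely when $A$ is among the $n$-stems of $C$, i.e. when $C\in cert(\Gamma_n)$ for some node $\Gamma_n$ in level $n$ of covtree with $A\in\Gamma_n$. Since the level-$n$ nodes partition (the certifiable part of) $\Omega$ — every infinite order has a well-defined set of $n$-stems, which is some $\Gamma_n\in\Lambda$ by Lemma \ref{claimlevel} — we get
\begin{equation}
stem(A)=\bigcup_{\substack{\Gamma_n\in\Lambda,\ \Gamma_n\subseteq\Omega(n)\\ A\in\Gamma_n}} cert(\Gamma_n),
\end{equation}
a countable union (there are only countably many finite orders, hence countably many nodes at each level), so $stem(A)\in\mathcal{R}(\Sigma)$. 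Here I would be careful about the ground sets: $\mathcal{S}$ and $\Sigma$ are both collections of subsets of $\tilde\Omega$ (or of $\Omega$) — I should check the excerpt's conventions and phrase the identity on whichever common domain $stem(\cdot)$ and $cert(\cdot)$ are defined, noting that $cert(\Gamma_n)\subseteq\Omega$ consists of infinite orders while the stem set as originally defined sits in $\tilde\Omega$; the covariance of both families lets us pass between labeled and unlabeled pictures.

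Conversely, for a node $\Gamma_n=\{A^1,\dots,A^k\}$ in covtree, an infinite order is a certificate of $\Gamma_n$ iff its set of $n$-stems is exactly $\Gamma_n$, i.e. it contains each $A^i$ as a stem and contains no other $n$-order as a stem. Hence
\begin{equation}
cert(\Gamma_n)=\Bigl(\bigcap_{i=1}^{k} stem(A^i)\Bigr)\cap\Bigl(\bigcap_{B\in\Omega(n)\setminus\Gamma_n} stem(B)^{c}\Bigr),
\end{equation}
again a countable intersection of stem sets and complements of stem sets, so $cert(\Gamma_n)\in\mathcal{R}(\mathcal{S})$; the empty set is trivially in $\mathcal{R}(\mathcal{S})$. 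Combining the two displayed identities gives $\mathcal{S}\subseteq\mathcal{R}(\Sigma)$ and $\Sigma\subseteq\mathcal{R}(\mathcal{S})$, whence $\mathcal{R}(\mathcal{S})=\mathcal{R}(\Sigma)$.

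The main obstacle is not any deep argument but bookkeeping about domains and well-definedness: verifying that ``the set of $n$-stems of an infinite order $C$'' is always a genuine covtree node (so that the union over nodes in the first identity really exhausts $stem(A)$ and the events $cert(\Gamma_n)$ really do partition the space at each level), and making sure that when I write $stem(A)$ I mean the event as a subset of the same space on which $cert(\Gamma_n)$ lives. Once the spaces are aligned, countability of $\Omega(\mathbb{N})$ does all the work and the set-theoretic identities above are routine.
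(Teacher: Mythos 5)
Your proposal is correct and follows essentially the same route as the paper's proof: the same two set-theoretic identities ($stem(A)$ as a union of certificate sets over the level-$n$ nodes containing $A$, and $cert(\Gamma_n)$ as an intersection of the $stem(A^i)$ minus the stem sets of the excluded $n$-orders), together with the same remark about identifying the sample spaces $\tilde\Omega$ and $\Omega$ via covariance. The only cosmetic difference is that the paper notes these are \emph{finite} Boolean combinations (since $\Omega(n)$ is finite, so is the set of level-$n$ nodes), whereas you only invoke countability, which is equally sufficient.
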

\begin{proof}
\par First we note that as defined, these two $\sigma$-algebras are defined over different 
sample spaces: an element of $\Sigma$ is a set of infinite orders and an element of $\mathcal{S}$ is a set of infinite labeled causets. However, both the covariant algebra $\mathcal{R}$ and the stem algebra $\mathcal{R}(\mathcal{S})$ can be thought of, in an obvious way, as $\sigma$-algebras on the sample space ${\Omega}$ of infinite orders, since their elements are covariant. This is the sense in which the claim is to be interpreted.
\par We will show that any stem set -- thought of as a set of infinite orders -- can be constructed by finite set operations on the certificate sets and vice versa, and the result follows.
\par Consider an $n$-order  $B$. Let $\Gamma_n^i$ be the nodes in covtree such that $B\in \Gamma_n^i$, where $i$ labels the nodes. Suppose $C\in cert(\Gamma_n^i)$ for some $i$. Then $B$ is a stem in $C$ and hence $C\in stem(B)$. Suppose $C\notin cert(\Gamma_n^i)$ for all $i$. Then $B$ is not a stem in $C$ and hence $C\notin stem(B)$. It follows that $stem(B)=\bigcup_i cert(\Gamma_n^i)$.
\par Consider some node $\Gamma_n=\{A^1, ..., A^k\}$ in covtree. Let $\Omega(n)\setminus \Gamma_n =\{B^1, ..., B^l\}$. Suppose $C\in cert(\Gamma_n)$. Then $A^1,...,A^k$ are stems in $C$, and $B^1,...,B^l$ are not stems in $C$. Hence $C\in \bigcap \limits_{i=1}^k stem(A^i)\setminus \bigcup \limits_{j=1}^l stem(B^j)$. Suppose $C\notin cert(\Gamma_n)$. Then either $(i)$ there exists some $A^i\in \Gamma_n$ which is not a stem in $C \implies C\notin \bigcap \limits_{i=1}^k stem(A^i)$, or $(ii)$ there exists some $B^j\in \Omega(n)\setminus \Gamma_n$ which is a stem in $C \implies C\in \bigcup \limits_{j=1}^l stem(B^j)$. It follows that, $cert(\Gamma_n)=\bigcap \limits_{i=1}^k stem(A^i)\setminus \bigcup \limits_{j=1}^l stem(B^j)$.
\end{proof}

Hence every walk on covtree induces a unique measure on $\mathcal{R}(\mathcal{S})$, and every measure on $\mathcal{R}(\mathcal{S})$ induces a unique walk on covtree: the transition probability in the covtree walk from node $\Gamma_n$ to node $\Gamma_{n+1}$ directly above it is the measure of $cert(\Gamma_{n+1})$ divided by the measure of $cert(\Gamma_{n})$. Therefore, let us call a measure on $\mathcal{R}(\mathcal{S})$ a \textbf{covtree measure}.

By a similar argument to the above, there is  is a 1-1 correspondence between walks on labeled poscau and measures on $\tilde{\mathcal{R}}$ so we will call a measure on $\mathcal{R}(\mathcal{S})$ a \textbf{poscau measure} if it is a restriction to $\mathcal{R}(\mathcal{S})$ of some measure $\tilde{\mu}$ on $\tilde{\mathcal{R}}$. A \textbf{CSG measure} on $\mathcal{R}(\mathcal{S})$ is a poscau measure such that $\tilde{\mu}$ is induced by a CSG walk.

It follows from lemma \ref{lemma_100901} that every poscau measure on $\mathcal{R}(\mathcal{S})$ is a covtree measure on $\mathcal{R}(\mathcal{S})$. In fact, it is also true that every covtree measure is a poscau measure:

\begin{lemma} \label{lemma040901} For every measure $\mu$ on $\mathcal{R}(\mathcal{S})$ there exists an extension $\tilde{\mu}$ to $\tilde{\mathcal{R}}$.\end{lemma}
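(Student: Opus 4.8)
The plan is to realise the extension as a pushforward. Write $\mathcal{X}$ for the set of infinite paths in covtree starting at the root, equipped with the $\sigma$-algebra generated by the sets $\{\mathcal{P}\in\mathcal{X}\mid \Gamma_n\in\mathcal{P}\}$ over all nodes $\Gamma_n$. Lemma~\ref{claimlevel} gives a map $\pi$ sending an infinite order --- thought of, via the proof of Lemma~\ref{lemma_100901}, as a point of the sample space $\Omega$ on which $\mu$ lives --- to the path of nodes of which it is a certificate; since $\pi^{-1}(\{\Gamma_n\in\mathcal{P}\})=cert(\Gamma_n)\in\mathcal{R}(\mathcal{S})$, this $\pi$ is measurable and $\nu:=\pi_{*}\mu$ is a measure on $\mathcal{X}$ with $\nu(\{\Gamma_n\in\mathcal{P}\})=\mu(cert(\Gamma_n))$ for every node; I will assume for now that $\mu$, hence $\nu$, is a probability measure. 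The \emph{same} formula defines a Borel projection, also called $\pi$, from $\tilde{\Omega}$ to $\mathcal{X}$, now with $\pi^{-1}(\{\Gamma_n\in\mathcal{P}\})=cert(\Gamma_n)$ regarded as a subset of $\tilde{\Omega}$. The key step is to construct a \emph{measurable section} of this projection, i.e.\ a measurable $\Phi\colon\mathcal{X}\to\tilde{\Omega}$ with $\pi\circ\Phi=\mathrm{id}_{\mathcal{X}}$. Granting such a $\Phi$, set $\tilde{\mu}:=\Phi_{*}\nu$: then $\tilde{\mu}$ is a measure on $\tilde{\mathcal{R}}$ and $\tilde{\mu}(cert(\Gamma_n))=\nu((\pi\circ\Phi)^{-1}(\{\Gamma_n\in\mathcal{P}\}))=\nu(\{\Gamma_n\in\mathcal{P}\})=\mu(cert(\Gamma_n))$ for every node. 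Since the node certificate sets together with $\emptyset$ form the semiring $\Sigma$ that generates $\mathcal{R}(\mathcal{S})=\mathcal{R}(\Sigma)$ and $\Sigma$ contains the whole space (the certificate set of the root is all of $\Omega$), uniqueness of the extension of a finite measure from a generating semiring forces $\tilde{\mu}|_{\mathcal{R}(\mathcal{S})}=\mu$, so $\tilde{\mu}$ is the required extension. (If $\mu$ is a non-normalised finite measure, rescale; the zero measure is trivial.)

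To build $\Phi$ I would run a choice-free version of the algorithm proving Theorem~\ref{theorem1}. Fix once and for all a well-ordering of the finite orders and of the finite labeled causets. Given a path $\mathcal{P}=\{\Gamma_1,\Gamma_2,\dots\}$, run the inductive procedure of Lemma~\ref{claim3_theorem_1_prep} with $m_0=1$, always taking the least admissible option at each choice point: the update level $m_k$, the element $C_{m_k}\in\Gamma_{m_k}$ certifying $\Gamma_{m_{k-1}}$, and the labeled representative $\lc{C}_{m_k}$. I additionally insist that at each stage $\lc{C}_{m_k}$ be relabelled so that its ground set is the interval $[m_k-1]$ with $\lc{C}_{m_{k-1}}$ carried onto the initial sub-interval $[m_{k-1}-1]$; this is always possible because $\lc{C}_{m_{k-1}}$ occurs in $\lc{C}_{m_k}$ as a stem, hence as a down-set, so the remaining $m_k-m_{k-1}$ elements can consistently receive the larger labels. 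With these conventions the $\lc{C}_{m_k}$ are literally nested initial segments, $\lc{C}_{m_{k+1}}|_{[m_k-1]}=\lc{C}_{m_k}$, so $\Phi(\mathcal{P}):=\bigcup_k\lc{C}_{m_k}$ is a well-defined element of $\tilde{\Omega}$, and, exactly as in the proof of Theorem~\ref{theorem1}, it is a certificate of every node of $\mathcal{P}$. Hence the set of $n$-stems of $\Phi(\mathcal{P})$ is the level-$n$ node of $\mathcal{P}$ for every $n$, i.e.\ $\pi(\Phi(\mathcal{P}))=\mathcal{P}$ and $\Phi$ is a section.

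The main obstacle is the measurability of $\Phi$. It suffices to show that $\{\mathcal{P}\mid\Phi(\mathcal{P})|_{[j-1]}=\lc{E}\}$ lies in the $\sigma$-algebra of $\mathcal{X}$ for every $j$ and every $\lc{E}\in\tilde{\Omega}(j)$, since the cylinder sets $cyl(\lc{E})$ generate $\tilde{\mathcal{R}}$. The delicate point is that the update levels $m_k$ are themselves path-dependent, so $\Phi(\mathcal{P})|_{[j-1]}$ does not factor through a \emph{fixed} finite tuple of coordinates of $\mathcal{P}$. However, every choice made at stage $k$ is a function of $\Gamma_1,\dots,\Gamma_{m_k}$ alone, and $\Phi(\mathcal{P})|_{[j-1]}=\lc{C}_{m_K}|_{[j-1]}$ as soon as $m_K\ge j$; therefore $\{\mathcal{P}\mid\Phi(\mathcal{P})|_{[j-1]}=\lc{E}\}$ is the countable union, over all finite ``transcripts'' $1=m_0<m_1<\dots<m_K$, of the sets of paths for which the algorithm with that transcript outputs $\lc{E}$ on $[j-1]$; each such set constrains only finitely many coordinates of $\mathcal{P}$ and is thus measurable, so the union is measurable. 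A more abstract route to the same end: $\tilde{\Omega}$ and $\mathcal{X}$ are Polish, $\pi$ is Borel, and by Theorem~\ref{theorem1} $\pi$ is surjective, so the Jankov--von Neumann uniformisation theorem provides a universally measurable section $\Phi$ --- which is enough to push the probability measure $\nu$ forward and obtain $\tilde{\mu}$ as above. Either way, the reduction of the first paragraph then finishes the proof.
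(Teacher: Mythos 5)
Your argument is correct, but it is a genuinely different proof from the one in the paper. The paper's proof is purely abstract: it notes that $(\tilde{\Omega},\tilde{\mathcal{R}})$ is Polish, hence Lusin, that $\mathcal{R}(\mathcal{S})$ is a separable (countably generated) sub-$\sigma$-algebra, and then invokes the Landers--Rogge extension theorem to conclude that $\mu$ extends to $\tilde{\mathcal{R}}$; it is two lines long but non-constructive and leans on an external result. You instead \emph{exhibit} an extension as the pushforward of $\mu$ along a measurable section $\Phi$ of the ``take the path of stem-sets'' map, built by rigidifying the algorithm of Lemma \ref{claim3_theorem_1_prep} with canonical choices and a nesting relabelling, and then use agreement on the generating $\pi$-system $\Sigma$ (which contains $\tilde{\Omega}=cert(\Gamma_1)$ and is intersection-closed by Lemma \ref{semiring_proof}) to force $\tilde{\mu}|_{\mathcal{R}(\mathcal{S})}=\mu$. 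The one point that genuinely needs the care you give it is the measurability of $\Phi$, since the update levels $m_k$ are path-dependent; your transcript decomposition handles this correctly because every stage-$k$ choice (including the minimality of $m_k$) is a function of the finite prefix $\Gamma_1,\dots,\Gamma_{m_k}$, so each cylinder preimage is a countable union of finitely-determined events. What your route buys is self-containedness (only Dynkin/Carath\'eodory uniqueness is used) and an explicit extension, namely the one concentrated on a canonical labeled certificate per covtree path; what it costs is length, and the resulting $\tilde{\mu}$ is a rather degenerate labeled-poscau measure (certainly not order-invariant), whereas the paper's theorem asserts nothing about which extension one gets either. Your alternative ``abstract route'' via Jankov--von Neumann is also fine (modulo completing $\nu$ to push forward along a merely universally measurable section) but at that level of machinery the paper's Lusin-space argument is the more economical choice.
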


\begin{proof}
First note that there is a metric on $\tilde{\Omega}$ with respect to which $(\tilde{\Omega}, \tilde{\mathcal{R}})$ is a Polish space \cite{Brightwell:2002vw}. Since every Polish space is a Lusin space \cite{Schwartz:1973}, $(\tilde{\Omega}, \tilde{\mathcal{R}})$ is a Lusin space. 
Note also that $\mathcal{R}(\mathcal{S})$ is a separable sub-$\sigma$-algebra of $\tilde{\mathcal{R}}$ since there exists a countable collection of subsets of $\tilde{\Omega}$ which generates $\mathcal{R}(\mathcal{S})$, namely $\mathcal{S}$ (or $\Sigma$).
The result follows from the theorem that if $(Y, \mathcal{B})$ is a Lusin space, then every measure defined on a separable
sub-$\sigma$-algebra of $\mathcal{B}$ can be extended to $\mathcal{B}$ \cite{Landers:1974}.
\end{proof}

\section{Discussion} \label{secdiscussion}

At the beginning of section \ref{seccovtree} we posed the question: ``Is it possible to construct a physically well-motivated measure on the stem algebra $\mathcal{R}(\mathcal{S})$ \textit{directly}, in a manifestly label-independent way that does not rely on any gauge dependent notion?" 
The key phrase here is \textit{physically well-motivated}. We have shown that we can generate a mathematically well-defined measure on the stem events $\mathcal{R}(\mathcal{S})$ via a growth process conceived as a random walk up covtree. There is no reason to expect, however, that a generic such walk will be physically interesting: the class of walks is too vast to be interesting. We need physically motivated conditions to restrict the models to a sub-class worth studying. This is what was done by Rideout and Sorkin in the context of walks up labeled poscau by 
imposing the conditions of discrete general covariance (DGC) and Bell causality (BC) \cite{Rideout:1999ub}. These conditions restrict the class of walks on labeled poscau to the CSG models.

The relationship between the ``labeled'' conditions of DGC and BC and any conditions on covtree walks is not understood. Note that lemma \ref{lemma040901} means that although every covtree walk is apparently completely covariant in its setup, for every walk on labeled poscau -- whether it satisfies Discrete General Covariance or not -- there exists a covtree walk that produces the same measure on $\mathcal{R}(\mathcal{S})$. So, 
there is no easy relationship between the DGC condition on a labeled poscau walk and the manifest ``covariance'' of a covtree walk. 
We can frame the sort of progress we'd like to make from here as a set of interrelated questions.\begin{itemize}
\item[ (i) ] Is there a condition on the transition amplitudes of a walk up covtree such that the covtree measure is a poscau measure from a walk on labeled poscau that satisfies DGC only, measures which Brightwell and Luczak call ``order-invariant''? \cite{Brightwell:2011,Brightwell:2012,Brightwell:2016}. 
\item[(ii)]  Is there a condition on the transition amplitudes of a walk up covtree such that the covtree measure equals a CSG measure? 
\item[(iii)] Is there a condition on a random walk up covtree which expresses the physical condition of relativistic causality? How is this related to the condition of Bell Causality satisfied by CSG models as walks on labeled poscau? Is this new condition enough to reduce the class to a physically interesting one or are other conditions needed and what are they?
\item[(iv)] What is the role of the rogues, if any, in understanding the physics of covtree walks? Could the condition that the set of rogues has measure zero -- as it does for any CSG model -- be considered as a physical condition in itself and what conditions on the transition amplitudes for the walk would imply this condition? 
\item[(v)] What form might a quantum random walk on covtree take and might it be possible to formulate a quantum relativistic causality condition for it, even while the labeled BC condition has thus far resisted a quantal generalisation?
\end{itemize} 

Here we start to grapple with the kinds of knotty questions that crop up when considering what a condition of relativistic causality might look like in a theory in which the spacetime causal order itself is dynamical and stochastic/quantal and in which labels/coordinates are banned, even as a prop to kick away at the end. Here, in covtree, at least we now have a concrete arena in which to investigate these questions. 

\par \textbf{Acknowledgments:} We thank Jeremy Butterfield for useful discussions. This research was supported in part by Perimeter Institute for Theoretical Physics. Research at Perimeter Institute is supported by the Government of Canada through Industry Canada and by the Province of Ontario through the Ministry of Economic Development and Innovation. FD is supported in part by STFC grant ST/P000762/1 and APEX grant APX/R1/180098. SZ thanks the Perimeter Institute and the Raman Research Institute for hospitality while this work was being completed. SZ is partially supported by the Kenneth Lindsay Scholarship Trust. 

\appendix
\section{Table of sets defined in the text}

\begin{tabular}{| l | l |}
\hline
  $\tilde{\Omega}(n)$ & The set of labeled causets of cardinality $n$ \\ \hline
  $\tilde{\Omega}(\mathbb{N})$ & The set of finite labeled causets \\ \hline
  $\tilde{\Omega}$ & The set of infinite labeled causets \\ \hline
  $\Omega(n)$ & The set of $n$-orders  \\ \hline
  $\Omega(\mathbb{N})$ & The set of finite orders \\ \hline
  $\Omega$ & The set of infinite orders \\ \hline
  $cyl(\lc{A})$ &  $cyl(\lc{A})=\{ \lc{C}\in\tilde{\Omega} \mid \lc{A} \text{ is a stem in } \tilde{C}\}$\\ \hline
 	$\tilde{\mathcal{R}}$ & The $\sigma$-algebra generated by the cylinder sets \\ \hline
 	$\mathcal{R}$ & The covariant sub-$\sigma$-algebra of $\tilde{\mathcal{R}}$ \\ \hline
	$stem(A)$ & $stem(A) = \{  \lc{C} \in \tilde{\Omega} \ |\  A \ \textrm{is a stem in} \ \lc{C} \}$\\ \hline
	$\mathcal{S}$ & The set of stem sets \\ \hline
	$\mathcal{R}(\mathcal{S})$ & The sub-$\sigma$-algebra of $\tilde{\mathcal{R}}$ generated by $\mathcal{S}$ \\ \hline
	$\Gamma_n$ &  A subset of $\Omega(n)$\\ \hline
	$cert(\Gamma_n)$ & $cert(\Gamma_n)=\{ C\in\Omega \mid C \text{ is a certificate of } \Gamma_n\}$ \\ \hline
  $\Lambda$ &  $\Lambda = \bigcup\limits_{n\in\mathbb{N}^+}\{\Gamma_n \subseteq \Omega(n)| \exists \text{ a certificate for } \Gamma_n\}$\\ \hline
	$\mathcal{P}$ & An infinite path from the root in covtree \\ \hline
	$\Sigma$ & The set of certificate sets together with the empty set \\ \hline
	$\mathcal{R}(\Sigma)$ & The sub-$\sigma$-algebra of $\tilde{\mathcal{R}}$ generated by $\Sigma$ \\ \hline
\end{tabular}
\pagebreak

\section{Proofs}\label{appendix_proofs}
\subsection{Appendix to section \ref{subseccovtree}}\label{certificate_appendix}

\begin{figure}[htpb]
  \centering
	\includegraphics[width=0.66\textwidth]{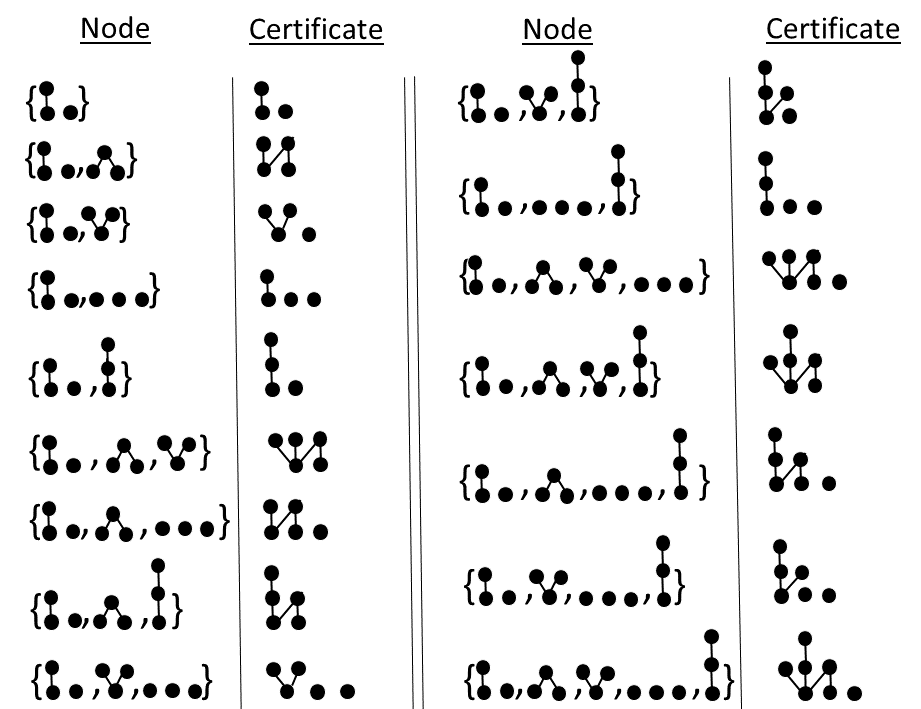}
	\caption{The level 3 nodes which are directly above the level 2 doublet are shown together with their respective certificates.}
	\label{appendix_certificates_level_3}
\end{figure}

\begin{figure}[htpb]
  \centering
	\includegraphics[width=0.6\textwidth]{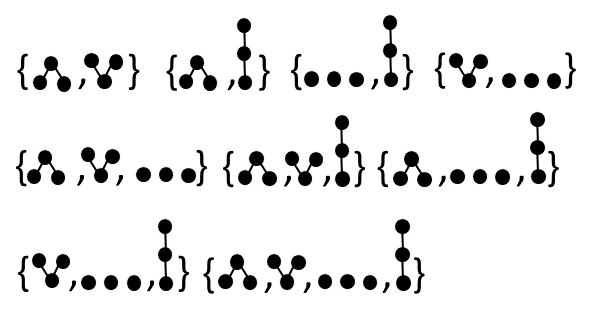}
	\caption{The sets shown in the figure have no certificates and therefore are not nodes. For every set shown, if an order contains all the elements of that set as stems then it also contains the L as a stem.}
	\label{nonnodes}
\end{figure}

\newpage

\subsection{Appendix to section \ref{sectheorems}}

\begin{proof}[Alternative proof to theorem \ref{theorem1}]

Recall that $\sim_R$ denotes the rogue equivalence relation, and let $p:\Omega\rightarrow\Omega/\sim_R$ be the associated canonical quotient map. Let $[A]_R$ denote an element of $\Omega/\sim_R$, where $A\in\Omega$ is a representative of $[A]_R$.

Define the following metric on $\Omega/\sim_R$: $d([A]_R,[B]_R)=\frac{1}{2^n}$, where $n$ is the highest integer such that the set of $n$-stems of $A$ is the set of $n$-stems of $B$. One can show that $(\Omega/\sim_R,d)$ is a complete metric space.

Let $[cert(\Gamma_n)]_R\subset \Omega/\sim_R$ denote the image of the certificate set $cert(\Gamma_n)\subset\Omega$ under the quotient map $p$. Then one can show that $[cert(\Gamma_n)]_R$ is both open\footnote{$[cert(\Gamma_n)]_R$ are exactly the open balls under the metric topology.} and closed in $(\Omega/\sim_R,d)$.

The diameter of $[cert(\Gamma_n)]_R$, $d([cert(\Gamma_n)]_R)$, is defined to be the maximum distance between any two elements in $[cert(\Gamma_n)]_R$ and is equal to $\frac{1}{2^{n}}$. Hence $d([cert(\Gamma_n)]_R)\rightarrow 0$ as $n\rightarrow\infty$.

Given a path in covtree, $\mathcal{P}=\{\Gamma_1, \Gamma_2, ...\}$, then the following is a nested sequence: $[cert(\Gamma_1)]_R\supset [cert(\Gamma_2)]_R \supset\dots$.

Now, Cantor's Lemma states that a metric space $(X, d)$ is complete if and only if, for every nested sequence $\{F_n\}_{n\geq 1}$ of nonempty closed subsets of $X$, that is, (a) $F_1\supseteq 	 F_2\supseteq  \dots$ and (b) $d(F_n)\rightarrow 0$ as $n\rightarrow\infty$, the intersection $\bigcap_{n=1}^{\infty} F_n$ contains one and only one point \cite{Shirali:2006}. 
\end{proof}

\subsection{Appendix to section \ref{subsec_measures_on_R(S)}}\label{appendix_subsec_R(S)}

Recall that $\Sigma$ is the collection of certificate sets with the empty set.
\begin{lemma} \label{semiring_proof}
$\Sigma$ is a semi-ring.
\end{lemma}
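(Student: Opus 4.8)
The plan is to verify directly that $\Sigma$ satisfies the three defining properties of a semi-ring of sets: it contains the empty set; it is closed under pairwise intersection; and the set-theoretic difference of two of its members is a finite disjoint union of members. The first property holds by definition, since $\Sigma$ was defined to include $\emptyset$. For the remaining two, the key structural fact I would exploit is that covtree is a \emph{tree}: for any two nodes $\Gamma_n$ and $\Gamma_m$, either one lies on the unique downward path from the other to the root, or neither does. Combined with corollary \ref{certdown} and lemma \ref{claimlevel} --- which together say that the nodes of which a given infinite order $C$ is a certificate form a single path from the root --- this pins down exactly how the certificate sets overlap.

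First I would prove the intersection property. Take two nodes $\Gamma_n$ and $\Gamma_m$ with, say, $n \le m$. If $\Gamma_n$ lies on the downward path from $\Gamma_m$ to the root (equivalently ${\mathcal O}_-^{\,m-n}(\Gamma_m) = \Gamma_n$), then by corollary \ref{certdown} every certificate of $\Gamma_m$ is a certificate of $\Gamma_n$, so $cert(\Gamma_m) \subseteq cert(\Gamma_n)$ and the intersection is just $cert(\Gamma_m) \in \Sigma$. If $\Gamma_n$ does not lie on that path, then by lemma \ref{claimlevel} no infinite order can be a certificate of both (its certificate-nodes form one path, which cannot contain two distinct level-$n$ and level-$m$ nodes unless the level-$n$ one is the ancestor of the level-$m$ one), so $cert(\Gamma_n) \cap cert(\Gamma_m) = \emptyset \in \Sigma$. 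Intersections involving $\emptyset$ are trivial. Hence $\Sigma$ is closed under intersection.

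Next the difference property. Given $cert(\Gamma_n)$ and $cert(\Gamma_m)$ with $n \le m$: if they are disjoint, $cert(\Gamma_n) \setminus cert(\Gamma_m) = cert(\Gamma_n) \in \Sigma$; if $cert(\Gamma_m) \subseteq cert(\Gamma_n)$ (the ancestor case above), I would write $cert(\Gamma_n) \setminus cert(\Gamma_m)$ as the disjoint union of the certificate sets of all the nodes at level $m$ that descend from $\Gamma_n$ other than $\Gamma_m$ itself. The point is that every certificate $C$ of $\Gamma_n$ has, by lemma \ref{claimlevel}, a well-defined level-$m$ node $\Gamma_m'$ on its certificate path, and $\Gamma_m'$ descends from $\Gamma_n$ by corollary \ref{certdown}; so $cert(\Gamma_n)$ is the disjoint union of $cert(\Gamma_m')$ over all level-$m$ descendants $\Gamma_m'$ of $\Gamma_n$, and removing $cert(\Gamma_m)$ just drops one term. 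The one subtlety is that this union must be \emph{finite}: this follows because there are only finitely many $m$-orders altogether (finitely many subsets $\Gamma_m \subseteq \Omega(m)$), so in particular finitely many level-$m$ nodes of covtree. Differences involving $\emptyset$ are again trivial.

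I expect the main obstacle to be bookkeeping rather than conceptual: making sure the "descendants of $\Gamma_n$ at level $m$" partition $cert(\Gamma_n)$ cleanly --- i.e. that every certificate of $\Gamma_n$ is a certificate of exactly one such descendant (existence from lemma \ref{claimlevel} applied to the infinite order, uniqueness from the tree structure), and that each such $cert(\Gamma_m')$ is genuinely nonempty and a member of $\Sigma$ (it is, since $\Gamma_m'$ is a node). Once the tree structure is invoked correctly these are all routine, so I would keep the write-up brief, emphasising the dichotomy "ancestor on the path / not on the path" as the organising principle and citing corollary \ref{certdown} and lemma \ref{claimlevel} at each step.
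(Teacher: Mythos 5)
Your proposal is correct and follows essentially the same route as the paper's proof: the same dichotomy (ancestor on the unique downward path versus not) handles closure under intersection, and the same decomposition of $cert(\Gamma_n)\setminus cert(\Gamma_m)$ as the finite disjoint union of the certificate sets of the other level-$m$ descendants of $\Gamma_n$ handles the difference property. Your extra remarks on finiteness (finitely many level-$m$ nodes) and on why the level-$m$ descendants partition $cert(\Gamma_n)$ via lemma \ref{claimlevel} simply make explicit what the paper leaves implicit.
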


\begin{proof}
\par A family $\mathcal{F}$ of subsets of a set $\mathcal{M}$ is a semi-ring if $(i)$ $\emptyset \in  \mathcal{F}$, $(ii)$ $A\cap B\in \mathcal{F}$ for all $A, B \in \mathcal{F}$, and $(iii)$ for every pair of sets $A, B \in \mathcal{F}$ with $A\subset B$, the set $B\setminus A$ is the union of finitely many disjoint sets in $\mathcal{F}$ \cite{Kolmogorov:1975}.

\par Let $\Gamma_m$ and $\Gamma_n$ be nodes in covtree, $m>n$. Suppose $\Gamma_n\prec \Gamma_m$. Then $cert(\Gamma_m)\subset cert(\Gamma_n) \implies cert(\Gamma_m)\cap cert(\Gamma_n)=cert(\Gamma_m)\in \Sigma$. Suppose $\Gamma_n\not\prec \Gamma_m$. Then $cert(\Gamma_m)\cap cert(\Gamma_n)=\emptyset \in \Sigma$.
\par Let $cert(\Gamma_m)\subset cert(\Gamma_n)$. Then $cert(\Gamma_n)\setminus cert(\Gamma_m)$ is the set of all certificates of $\Gamma_n$ which are not certificates of $\Gamma_m$. Let $\Gamma_m^i$, 
$i = 1,2,\dots k$,  denote the nodes in level $m$ such that $\Gamma_m^i\succ \Gamma_n$ and $\Gamma_m^i \not= \Gamma_m$. Then $cert(\Gamma_n)\setminus cert(\Gamma_m)=\bigsqcup_i cert(\Gamma_m^i)$.
\end{proof}

Recall that a random walk on covtree, defined by the transition probabilities for each link in covtree satisfying the Markov sum rule, gives a measure $\mu$ on $\Sigma$: $\mu(cert(\Gamma_n)) = $ the product of the transition probabilities on the links of the 
path from the root to $\Gamma_n$ (and $\mu(\emptyset)=0)$.
\begin{lemma} \label{claim6.2}
The measure $\mu$ on $\Sigma$ is  countably-additive.
\end{lemma}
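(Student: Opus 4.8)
The plan is to use the standard criterion for premeasures (as in \cite{Kolmogorov:1975}): since $\Sigma$ is a semi-ring (Lemma \ref{semiring_proof}), it is enough to verify that $\mu$ is finitely additive on $\Sigma$ and continuous from above at $\emptyset$, and countable additivity follows. Finite additivity extends $\mu$ canonically to a finitely additive set function on the ring $\mathcal{A}$ of finite disjoint unions of certificate sets, and I would work there. Finite additivity itself is routine: a node $\Gamma_n$ of covtree has finitely many children, and their certificate sets partition $cert(\Gamma_n)$ — every certificate of $\Gamma_n$ has a unique set of $(n+1)$-stems, which is a child of $\Gamma_n$, and conversely $cert$ of a child sits inside $cert(\Gamma_n)$ by the definition of the covtree order together with Corollary \ref{certdown}. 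Hence the Markov sum rule gives $\mu(cert(\Gamma_n)) = \sum_{\Gamma'}\mu(cert(\Gamma'))$, the sum over children $\Gamma'$; iterating, any finite disjoint decomposition of a certificate set into certificate sets can be refined down to a common level and compared term by term.

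The substantive step is continuity at $\emptyset$, and this is where the local finiteness of covtree and Theorem \ref{theorem1} enter. First note that covtree is locally finite: level $n$ consists of subsets of the finite set $\Omega(n)$, so every node has finitely many children. Now suppose, for contradiction, that $A_1 \supseteq A_2 \supseteq \cdots$ is a decreasing sequence in $\mathcal{A}$ with $\bigcap_N A_N = \emptyset$ but $\mu(A_N) \ge \varepsilon > 0$ for all $N$. For each node $\Gamma$ of covtree, $A_N \cap cert(\Gamma) \in \mathcal{A}$, and by finite additivity $f_N(\Gamma) := \mu(A_N \cap cert(\Gamma))$ equals the finite sum of $f_N(\Gamma')$ over the children $\Gamma'$ of $\Gamma$. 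Since $A_N \supseteq A_{N+1}$, $f_N(\Gamma)$ is non-increasing in $N$, so $g(\Gamma) := \lim_N f_N(\Gamma) = \inf_N f_N(\Gamma)$ exists; passing to the limit in the finite sum over children gives $g(\Gamma) = \sum_{\Gamma'} g(\Gamma')$. At the root, $g = \inf_N \mu(A_N) \ge \varepsilon > 0$; descending one level at a time, at each stage some child must have positive $g$-value, since a finite sum of non-negative numbers that is positive has a positive term. This yields an infinite path $\mathcal{P}^\ast = \{\Gamma^{(1)}, \Gamma^{(2)}, \dots\}$ in covtree, starting at the root, with $g(\Gamma^{(i)}) > 0$ for every $i$.

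To finish, apply Corollary \ref{corollary2}: $\mathcal{P}^\ast$ has a certificate, an infinite order $C$ with $C \in cert(\Gamma^{(i)})$ for all $i$. I claim $C \in A_N$ for every $N$, which contradicts $\bigcap_N A_N = \emptyset$. Fix $N$ and write $A_N = \bigsqcup_j cert(\Delta^j)$; choose $\ell$ strictly larger than the level of every $\Delta^j$ and take the node $\Gamma^{(\ell)} \in \mathcal{P}^\ast$ at level $\ell$. Since $g(\Gamma^{(\ell)}) > 0$ we have $f_N(\Gamma^{(\ell)}) > 0$, so $A_N \cap cert(\Gamma^{(\ell)}) \ne \emptyset$, hence $cert(\Delta^j) \cap cert(\Gamma^{(\ell)}) \ne \emptyset$ for some $j$. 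But two certificate sets meet only if the two nodes are comparable in covtree: a common element, being an infinite order, determines by Lemma \ref{claimlevel} a single path through covtree containing both nodes, and nodes on a path are comparable. As $\Delta^j$ lies at a strictly lower level than $\Gamma^{(\ell)}$, this forces $\Delta^j \prec \Gamma^{(\ell)}$, whence $cert(\Gamma^{(\ell)}) \subseteq cert(\Delta^j) \subseteq A_N$ and in particular $C \in A_N$. Since $N$ was arbitrary, $C \in \bigcap_N A_N$, the desired contradiction; thus $\mu$ is continuous from above at $\emptyset$, and countable additivity follows.

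I expect the main obstacle to be precisely this continuity-at-$\emptyset$ step: it depends essentially on covtree being finitely branching (for the K\"onig-type extraction of $\mathcal{P}^\ast$) and on Theorem \ref{theorem1}, without which an infinite path need not carry an infinite order and the argument would break. An alternative, essentially equivalent route would be to observe that the quotient $\Omega/\sim_R$, with the ultrametric of the alternative proof of Theorem \ref{theorem1}, is not merely complete but compact — total boundedness again coming from local finiteness of covtree — with the certificate sets descending to clopen sets there, so that continuity at $\emptyset$ reduces to the finite-intersection property of a compact space.
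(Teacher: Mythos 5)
Your proof is correct, and it reaches the conclusion by a genuinely different route from the paper's, although the two arguments share their essential ingredients: the finite branching of covtree and Theorem \ref{theorem1}. The paper argues that countable additivity is vacuous beyond finite additivity: it shows that no certificate set can be written as a countably infinite disjoint union of certificate sets, by applying K\"onig's lemma to the subtree of nodes above $\Gamma_m$ that lie above none of the $\Gamma^i_{n_i}$, extracting an infinite upward path, and invoking Theorem \ref{theorem1} to produce a certificate of $\Gamma_m$ omitted by every $cert(\Gamma^i_{n_i})$. You instead prove the stronger statement that $\mu$, extended to the generated ring, is continuous from above at $\emptyset$, via a positivity-propagation argument down the tree --- in effect a measure-weighted variant of K\"onig's lemma --- followed by the same appeal to Theorem \ref{theorem1} (via Corollary \ref{corollary2}) and to the fact that intersecting certificate sets have comparable nodes, which you correctly justify from Lemma \ref{claimlevel} and Corollary \ref{certdown}. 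Your route costs a little more bookkeeping on the ring of finite disjoint unions, but it buys continuity at $\emptyset$ for arbitrary decreasing sequences, not merely those arising from a putative countable partition of a single certificate set; the paper's route is more economical precisely because it shows the nontrivial case of countable additivity on the semi-ring never arises. Your closing observation that $\Omega/\sim_R$ with the ultrametric of the appendix is compact (totally bounded because each level of covtree is a finite collection of subsets of the finite set $\Omega(n)$), with certificate sets descending to clopen sets, is also correct and would give a third, topological proof of the same continuity statement.
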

\begin{proof}
We defined $\mu:\Sigma \rightarrow [0,1]$ by $\mu(cert(\Gamma_m))=\mathbb{P}(\Gamma_m)$ where $\mathbb{P}(\Gamma_m)$ is the probability of a random walk to pass through $\Gamma_m$. Also $\mu(\emptyset)=0$. 

\par Suppose $cert(\Gamma_n)=\bigsqcup_{i=1}^k cert(\Gamma^i_{n_i})$, where $i$ labels the individual nodes. Then $\mu(\bigsqcup_{i=1}^k cert(\Gamma^i_{n_i}))=\mu(cert(\Gamma_n))=\mathbb{P}(\Gamma_n)=\sum_{i}\mathbb{P}(\Gamma^i_{n_i})$. Hence, finite additivity is satisfied.

\par Next we will show that countable additivity of $\mu$ is trivially satisfied as no certificate set is a countable disjoint union of certificate sets\footnote{This is a special case of the countable additivity property of cylinder sets associated with a Markov process on a directed finite-valency tree. Another example in the context of the measure space of coin-tosses is given in \cite{Lindstrom:2017}.}. Consider some $\Gamma_m$ in covtree and suppose for contradiction that $cert(\Gamma_m)=\bigsqcup_{i\in\mathbb{N}}cert(\Gamma^i_{n_i})$. Consider the following suborder in covtree, $\{\Gamma_n \in \Lambda | \Gamma_n\succeq \Gamma_m $ and $ \Gamma_n\not\succ \ \Gamma_{n_i}^i \forall \ i\}$, and let $T_m$ be the transitive reduction of it.
\par We note that $(i)$ $T_m$ is infinite, $(ii)$ every node in $T_m$ has finite valency, and $(iii)$ $T_m$ is a connected tree.
\par Then by K\"{o}nig's lemma, $T_m$ contains an infinite upward-going path starting at $\Gamma_m$ \cite{Wilson:1985}. It follows that there is an infinite path $\mathcal{P}$ in covtree such that $\Gamma_m\in \mathcal{P}$ and $\Gamma_{n_i}^i\notin \mathcal{P}$ for all $i\in\mathbb{N}$. Therefore there exists a certificate $C$ of $\mathcal{P}$ and hence of $\Gamma_m$ such that $C\notin cert(\Gamma_{n_i}^i)$ for all $i\in\mathbb{N}$, which is a contradiction.
\end{proof}

\bibliography{covtree}{}

\begin{thebibliography}{10}

\bibitem{Sorkin:2007qh}
Rafael~D. Sorkin.
\newblock {Relativity theory does not imply that the future already exists: a
  counterexample}.
\newblock In Vesselin Petkov, editor, {\em Relativity and the Dimensionality of
  the World}, volume 153 of {\em Fundamental Theories of Physics}. Springer,
  Dordrecht, 2007.

\bibitem{Stachel:2014}
John Stachel.
\newblock {The Hole Argument and Some Physical and Philosophical Implications}.
\newblock {\em J. Living Rev. Relativity}, 17:1, 2014.

\bibitem{Surya:2019ndm}
Sumati Surya.
\newblock {The causal set approach to quantum gravity}.
\newblock {\em Living Rev. Rel.}, 22(1):5, 2019.

\bibitem{Kronheimer:1967}
E.~H. Kronheimer and R.~Penrose.
\newblock On the structure of causal spaces.
\newblock {\em Mathematical Proceedings of the Cambridge Philosophical
  Society}, 63:481--501, 4 1967.

\bibitem{Hawking:2014xx}
Stephen Hawking.
\newblock {Singularities and the geometry of spacetime}.
\newblock {\em EPJ H}, 39:413, 2014.

\bibitem{Malament:1977}
David~B. Malament.
\newblock The class of continuous timelike curves determines the topology of
  spacetime.
\newblock {\em J. Math. Phys.}, 18:1399--1404, 1977.

\bibitem{Rideout:1999ub}
D.~P. Rideout and R.~D. Sorkin.
\newblock {A Classical sequential growth dynamics for causal sets}.
\newblock {\em Phys. Rev.}, D61:024002, 2000.

\bibitem{glossary}
Rafael~D. Sorkin.
\newblock Causal {S}ets: {D}iscrete {G}ravity. {N}otes for the {V}aldivia
  summer school.
\newblock \url{arXiv:gr-qc/0309009v1}, January 2002.

\bibitem{Brightwell:2002vw}
Graham Brightwell, H.~Fay Dowker, Raquel~S. Garcia, Joe Henson, and Rafael~D.
  Sorkin.
\newblock {'Observables' in causal set cosmology}.
\newblock {\em Phys. Rev.}, D67:084031, 2003.

\bibitem{Ash:2005za}
A.~Ash and P.~McDonald.
\newblock {Random partial orders, posts, and the causal set approach to
  discrete quantum gravity}.
\newblock {\em J. Math. Phys.}, 46:062502, 2005.

\bibitem{Brightwell:2011}
Graham Brightwell and Malwina Luczak.
\newblock {Order-invariant measures on causal sets}.
\newblock {\em The {A}nnals of {A}pplied {P}robability}, 21(4):1493--1536,
  2011.

\bibitem{Sorkin:1994dt}
Rafael~D. Sorkin.
\newblock {Quantum mechanics as quantum measure theory}.
\newblock {\em Mod. Phys. Lett.}, A9:3119--3128, 1994.

\bibitem{spacetimeatoms}
Fay Dowker.
\newblock The birth of spacetime atoms as the passage of time.
\newblock {\em Annals of the New York Academy of Sciences}, 1326:18--25, 2014.

\bibitem{Kolmogorov:1975}
A.~N. Kolmogorov and S.~V. Fomin.
\newblock {\em Introductory Real Analysis}.
\newblock Dover Publications, New York, 1975.

\bibitem{Brightwell:2002yu}
Graham Brightwell, H.~Fay Dowker, Raquel~S. Garcia, Joe Henson, and Rafael~D.
  Sorkin.
\newblock {General covariance and the 'Problem of time' in a discrete
  cosmology}.
\newblock In {\em {Alternative Natural Philosophy Association Meeting
  Cambridge, England, August 16-21, 2001}}, 2002.

\bibitem{Dowker:2010qh}
Fay Dowker, Steven Johnston, and Sumati Surya.
\newblock {On extending the Quantum Measure}.
\newblock {\em J. Phys.}, A43:505305, 2010.

\bibitem{Schwartz:1973}
Laurent Schwartz.
\newblock {\em Radon Measures on Arbitrary Topological Spaces and Cylindrical
  Measures}.
\newblock Oxford University Press, London, New York, 1973.

\bibitem{Landers:1974}
D.~Landers and L.~Z. Rogge.
\newblock {On the extension problem for measures}.
\newblock {\em Zeitschrift für Wahrscheinlichkeitstheorie und Verwandte
  Gebiete}, 30:167--169, 1974.

\bibitem{Brightwell:2012}
Graham Brightwell and Malwina Luczak.
\newblock {Order-invariant measures on fixed causal sets}.
\newblock {\em {C}ombinatorics, {P}robability and {C}omputing}, 21:330--357,
  2012.

\bibitem{Brightwell:2016}
Graham Brightwell and Malwina Luczak.
\newblock The mathematics of causal sets.
\newblock In {\em Recent {T}rends in {C}ombinatorics}, volume 159 of {\em The
  {IMA} {V}olumes in {M}athematics and its {A}pplications}. Springer, {C}ham,
  2016.

\bibitem{Shirali:2006}
Satish Shirali and Harkrishan~L. Vasudeva.
\newblock {\em Metric Spaces}.
\newblock Springer, London, 2006.

\bibitem{Lindstrom:2017}
Tom~L. Lindstr{\o}m.
\newblock {\em Spaces: {A}n {I}ntroduction to {R}eal {A}nalysis}.
\newblock Pure and Applied Undergraduate Texts. American Mathematical Society,
  2017.

\bibitem{Wilson:1985}
Robin Wilson.
\newblock {\em Introduction to Graph Theory}.
\newblock Longman, London, 1985.

\end{thebibliography}
\bibliographystyle{unsrt}

\end{document}